\newtheorem{lemma}{Lemma}
\newtheorem{theorem}{Theorem}
\newtheorem{definition}{Definition}
\newcommand{\mypar}[1]{{\bf #1.}}
\newcommand{\coord}[1]{\text{\bf #1}}
\newcommand{\C}[0]{{\mathbb{C}}}
\newcommand{\R}[0]{{\mathbb{R}}}
\newcommand{\DLT}[0]{\operatorname{DLT}}
\newcommand{\diag}[0]{\operatorname{diag}}
\newcommand{\ra}[1]{\renewcommand{\arraystretch}{#1}}
\newcommand{\meet}[0]{\wedge}
\newcommand{\bigmeet}[0]{\bigwedge}
\newcommand{\join}[0]{\vee}
\newcommand{\ft}[1]{\widehat{#1}}
\newcommand{\fr}[1]{\overline{#1}}
\newcommand{\chr}[1]{\iota_{\{ #1 \} }}
\newcommand{\TV}[0]{\operatorname{TV}}
\newcommand{\STV}[0]{\operatorname{STV}}
\newcommand{\latt}[0]{{\cal L}}
\newcommand{\sisu}[0]{{\cal A}}
\newcommand{\ftsu}[0]{{\cal B}}
\newcommand{\edges}[0]{{\cal E}}
\newcommand{\gens}[0]{{\cal G}}
\newcommand{\TVs}[0]{{\cal T}}
\newcommand{\minel}[0]{\text{\it min}}
\newcommand{\maxel}[0]{\text{\it max}}
\newcommand{\obj}[0]{\operatorname{obj}}
\newcommand{\attr}[0]{\operatorname{attr}}
\newcommand{\I}              {\mathrm{j}}
\DeclarePairedDelimiter\ideal\langle\rangle
\reDeclarePairedDelimiterInnerWrapper\ideal{star}{
\mathopen{#1\vphantom{\MTkillspecial{#2}}\kern-\nulldelimiterspace\right.}
#2
\mathclose{\left.\kern-\nulldelimiterspace\vphantom{\MTkillspecial{#2}}#3}}
\DeclarePairedDelimiter{\abs}{\lvert}{\rvert}
\DeclarePairedDelimiter{\norm}{\lVert}{\rVert}
\begin{document}

\title{Discrete Signal Processing \\on Meet/Join Lattices}
%
%
%
\author{Markus P{\"u}schel,~\IEEEmembership{Fellow,~IEEE}, Bastian Seifert,~\IEEEmembership{Member,~IEEE}, and Chris Wendler,~\IEEEmembership{Student Member,~IEEE}%
\thanks{The authors are with the Department of Computer Science, ETH
  Zurich, Switzerland (email: pueschel@inf.ethz.ch,
  bastian.seifert@inf.ethz.ch, chris.wendler@inf.ethz.ch)}}
\maketitle

\begin{abstract}
A lattice is a partially ordered set supporting a meet (join) operation that returns the largest lower bound (smallest upper bound) of two elements. Just like graphs, lattices are a fundamental structure that occurs across domains including social data analysis, natural language processing, computational chemistry and biology, and database theory. In this paper we introduce discrete-lattice signal processing (DLSP), an SP framework for data, or signals, indexed by such lattices. We use the meet (or join) to define a shift operation and derive associated notions of filtering, Fourier basis and transform, and frequency response. We show that the spectrum of a lattice signal inherits the lattice structure of the signal domain and derive a sampling theorem. Finally, we show two prototypical applications: spectral analysis of formal concept lattices in social science and sampling and Wiener filtering on multiset lattices in combinatorial auctions. Formal concept lattices are a representation of relations between objects and attributes. Since relations are equivalent to bipartite graphs and hypergraphs, DLSP offers a form of Fourier analysis for these structures.
\end{abstract}

\begin{IEEEkeywords}
Lattice theory, partial order, poset, directed acyclic graph, cover graph, shift, Fourier transform, sampling, relation, bipartite graph, hypergraph, formal concept lattice, multiset lattice, graph signal processing.
\end{IEEEkeywords}

%
\IEEEpeerreviewmaketitle


\section{Introduction}

\IEEEPARstart{W}{e} have entered the age of big data and signal processing (SP) as a classical data science is a key player for data analysis. One main challenge is to broaden the foundation of SP to become applicable across the large variety of data available.

Classical SP and many of its applications are built on the fundamental concepts of time- or translation-invariant systems, convolution, Fourier analysis, sampling, and others. But many datasets are not indexed by time, or its separable extensions to 2D or 3D. Thus an avenue towards broadening SP is in porting these concepts to index domains with inherently different structures to enable the power of the numerous SP techniques that build on these concepts.

\mypar{Graph signal processing and beyond} A prominent example of this approach is graph signal processing (GSP) \cite{Sandryhaila:13,Shuman:13}, which considers data indexed by the nodes of a graph. It replaces the time shift or translation by the adjacency or Laplacian operator, whose eigendecomposition yields the Fourier transform. Polynomials in the operator yield the associated notion of convolution. Building on these concepts, numerous follow-up works have reinterpreted and ported many classical SP tools to GSP \cite{Ortega.Frossard.Kovacevic.Moura.Vandergheynst:2018a}. In machine learning, graph convolutions have yielded new types of neural networks \cite{Bronstein:17}.

However, not all index domains are graphs or a graph may be the wrong abstraction. Thus, there have been various recent works that aim to further broaden SP beyond graphs. Examples include SP on hypergraphs \cite{Zhang:20}, graphons \cite{Ruiz:20}, simplicial complexes \cite{Barbarossa:20}, so-called quivers \cite{Paradamayorga:20}, and powersets \cite{Pueschel:20}.
Some of the above works have been inspired by the algebraic signal processing theory (ASP)~\cite{Pueschel:08a,Pueschel:06c}, which provides the axioms and a general approach to obtaining new SP frameworks. In particular, ASP identifies the shift operator (or operators) as key axiomatic concept from which convolution, Fourier analysis, and other concepts can be derived. Early examples derived SP frameworks on various path graphs~\cite{Pueschel:08b,Sandryhaila:12} or regular grids \cite{Pueschel:07}.

\mypar{Meet/join lattices} In this paper we consider signals, or data, indexed by meet/join lattices, a fundamental structure across disciplines. A lattice is a partially ordered set that supports a meet (join) operation that returns for each pair of elements their largest lower bound (smallest upper bound). We provide a few examples.

Formal concept lattices~\cite{Ganter.Wille:2012a} are used in social data analysis. Such a lattice distills relations between objects (e.g., users) and attributes (e.g., likes a certain website) into so-called concepts. Relations can be viewed as tables with binary entries and are equivalent to bipartite graphs or hypergraphs. Formal concept lattices also play a role in the emerging field of granular computing in information processing~\cite{Wei:16}.

Ranked data~\cite{Monjardet.Raderanirina:2004a} can be viewed as indexed by the set of permutations, which form a lattice, partially ordered by the number of neighbor exchanges they consist of.

The set of partitions of a finite set form a lattice, which is fundamental in database theory \cite{Lee:83}.

In combinatorial chemistry some reaction networks form a
lattice~\cite{Klein.Ivanciuc.Ryzhov.Ivanciuc:2008a}, whose elements are chemical structures. Lattice signals can then represent chemical properties, e.g., retention indices.

In evolutionary genetics a gene and its genotypes, i.e., variations by mutations, form a lattice \cite{Alberch:91, Beerenwinkel:11}. Phenotypes, such as the resistance to certain drugs, are then naturally represented by lattice signals \cite{Bennett:09}.


In natural language processing lattices occur naturally, e.g., the set of words with the prefix order and a meet operation that returns the longest common prefix~\cite{Birget:93}.


%

\mypar{Contribution: Discrete-lattice SP (DLSP)} In this paper we present a novel linear SP framework, called DLSP, for signals indexed by finite meet/join lattices. First, we derive the theory using the general high-level procedure provided by ASP \cite{Pueschel:08a}. Namely, we use the meet (or join) operation to define a notion of shift, from which we derive the associated notions of convolution, Fourier basis, Fourier transform, frequency response, and others. Using the concept of total variation, we show that the spectrum is again partially ordered and forms a lattice isomorphic to the signal domain. We derive a sampling theorem for signals with known sparse Fourier support.

In the second part we show two possible application domains: spectral analysis of formal concepts lattices occurring in social data analysis and sampling and Wiener filtering on multiset lattices occurring in combinatorial auctions. Formal concept lattices are particularly interesting as they distill relations between objects and attributes into a lattice. Since relations are equivalent to bipartite graphs and hypergraphs, DLSP offers a form of Fourier analysis for data indexed by these structures.

This paper extends and completes the preliminary work in \cite{Pueschel:19,Wendler:19,Seifert.Wendler.Pueschel:2021a}.

\section{Background on Lattices}

In this section we provide the necessary basic background on semilattices and lattices. Lattice theory is a well-studied area in mathematics. Good reference books are \cite{Stanley:11,Graetzer:11}, which also give an idea of the rich theory available for lattices.

\mypar{Partially ordered sets}
We consider finite sets $\latt$ with a {\em partial order} $\leq$, also called {\em posets}. We denote elements of $\latt$ with lower case letters $a,b,x,\dots$. Formally, a poset satisfies three axioms for all $a,b,c\in\latt$: 
\begin{enumerate}
	\item $a\leq a$,
	\item $a\leq b$ and $b\leq a$ implies $a=b$, and
	\item $a\leq b$ and $b\leq c$ implies $a\leq c$.
\end{enumerate}
We will use $a<b$ if $a\leq b$ but $a\neq b$.

We say that $b$ covers $a$, if $a < b$ and there is no $x\in L$ with $a<x<b$. Equivalently, $a$ is a largest element (there could be several) below $b$. 

\mypar{Meet-semilattice}
A meet-semilattice is a poset $\latt$ that also permits a meet operation $a\meet b$, that returns the greatest lower bound of $a$ and $b$, which must be unique. Formally, it satisfies for all $a,b,c\in\latt$:
\begin{enumerate}
\item $a\meet a = a$,
\item $a\meet b = b\meet a$, and 
\item $(a\meet b)\meet c = a\meet(b\meet c)$. 
\end{enumerate}
The latter two show that $\meet$ is commutative and associative.

Using the notion of cover, semilattices can be visualized through a directed acyclic graph $(\latt, \edges)$ called cover graph. The nodes are the elements of $\latt$ and $(b,a)\in\edges$ if $b$ covers $a$. The elements of $\edges$ are called covering pairs. The graph is typically drawn in a way that if $a\leq b$, then $b$ is higher than $a$. 

\begin{figure*}\centering
	\subcaptionbox{Some meet-semilattice\label{l1}}[0.24\linewidth]
	{\includegraphics[scale=0.5]{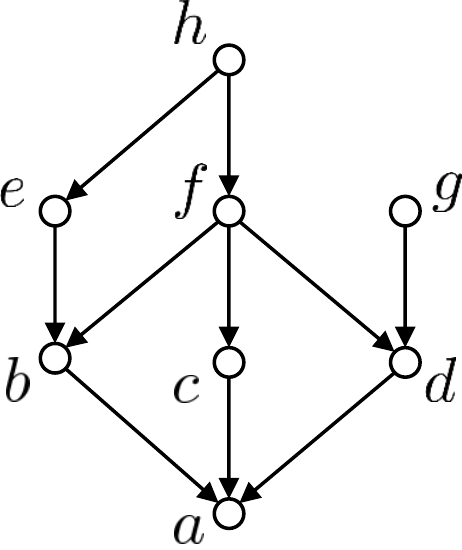}}
	\subcaptionbox{Total order lattice\label{l2}}[0.24\linewidth]
	{\includegraphics[scale=0.5]{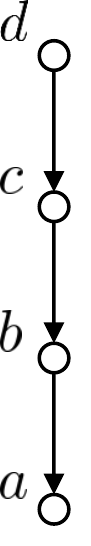}}
	\subcaptionbox{Subset lattice\label{l3}}[0.24\linewidth]
	{\includegraphics[scale=0.5]{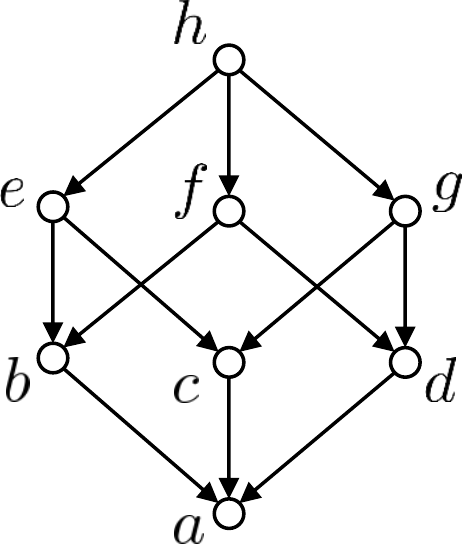}}
	\subcaptionbox{Not a semilattice\label{nl}}[0.24\linewidth]
	{\includegraphics[scale=0.5]{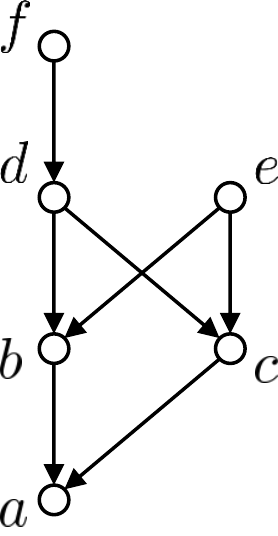}}
	\caption{Examples of semilattices $\latt$. The labels are the names of the nodes, i.e., elements of $\latt$. In this paper, we consider signals on $\latt$ that associate values which each node.\label{semilattices}}
\end{figure*}

Fig.~\ref{l1} shows a small example of a meet-semilattice. The cover graph shows, for example, that $c\leq f, b\leq h, c\not\leq g$ and $e\meet f = b, h\meet g = a$. Note that a meet semilattice must have a unique smallest element $\minel$, which is the meet of all elements in $\latt$. In Fig.~\ref{l1} the smallest element is $\minel = a$.

\mypar{Lattice}
Dual to the meet-semilattice, one can define a {\em join-semilattice} as a poset with an operation $\join$ that returns the least upper bound and satisfies analogous properties and necessarily has a unique maximal element. If a finite poset supports both a meet and a join operation, it is called a {\em lattice}. Fig.~\ref{l1} is not a join semilattice since the join of $g$ and $h$ does not exist. A join semilattice has a unique maximum (the join of all elements).

In the case of finite lattices $\latt$ considered here, the difference between semilattices and lattices is marginal. More precisely, the following holds:
\begin{lemma}\label{addmax}
Every meet-semilattice with a unique maximal element $\maxel$, i.e., $\maxel\geq a$ for all $a\in\latt$, is a lattice. Namely, if $a^u = \{x\in\latt\mid x\geq a\}$, then the join of $a$ and $b$ can be computed as
$$
a\join b = \bigmeet a^u\cap b^u.
$$
\end{lemma}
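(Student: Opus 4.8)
The plan is to verify directly that $c := \bigmeet\,(a^u\cap b^u)$ satisfies the two defining properties of a least upper bound of $a$ and $b$: that $c$ is an upper bound of both, and that it lies below every common upper bound. Before doing this, I would first record that arbitrary finite non-empty meets exist and behave as greatest lower bounds. Using commutativity and associativity of $\meet$, for any finite non-empty $S=\{s_1,\dots,s_n\}\subseteq\latt$ the expression $\bigmeet S = s_1\meet\cdots\meet s_n$ is well-defined, independent of order and bracketing. A short induction on $n$ then shows that $\bigmeet S$ is the greatest lower bound of $S$: it is a lower bound because $\bigmeet S\leq s_i$ for each $i$, and if $\ell\leq s_i$ for all $i$, then applying the induction hypothesis to $\{s_1,\dots,s_{n-1}\}$ and afterwards the defining property of the binary meet yields $\ell\leq\bigmeet S$.

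Next I would observe that the set of common upper bounds $a^u\cap b^u=\{x\in\latt\mid x\geq a \text{ and } x\geq b\}$ is non-empty; this is exactly where the hypothesis enters, since $\maxel\geq a$ and $\maxel\geq b$ give $\maxel\in a^u\cap b^u$. Hence $c=\bigmeet\,(a^u\cap b^u)$ is defined by the preceding paragraph.

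Finally I would establish the two properties. For the upper-bound property: every $x\in a^u\cap b^u$ satisfies $x\geq a$, so $a$ is a lower bound of $a^u\cap b^u$; since $c$ is its greatest lower bound, $c\geq a$, and symmetrically $c\geq b$. For minimality: if $d$ is any common upper bound of $a$ and $b$, then $d\in a^u\cap b^u$, so $d$ occurs among the elements whose meet is $c$, whence $c\leq d$. Thus $c$ is the least upper bound of $a$ and $b$, the join operation $\join$ exists and has the claimed form, and since meets exist by assumption, $\latt$ is a lattice.

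I do not expect a genuine obstacle here. The only point requiring care is the preliminary fact that finite meets exist and coincide with greatest lower bounds, which must be in place before the displayed formula $\bigmeet\,(a^u\cap b^u)$ is even meaningful; everything after that is a two-line check.
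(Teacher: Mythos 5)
Your proof is correct and takes exactly the route the paper intends: the paper states the lemma without a formal proof, remarking only that the maximum guarantees $a^u\cap b^u\neq\emptyset$, and your argument fills in the standard remaining steps (finite meets are greatest lower bounds, hence $\bigmeet (a^u\cap b^u)$ is an upper bound of $a$ and $b$ and lies below every common upper bound). No gaps; the preliminary induction establishing that $\bigmeet S$ is the greatest lower bound of a finite non-empty $S$ is the right thing to make the displayed formula meaningful.
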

The maximum ensures that $a^u\cap b^u$ is not empty. As a consequence we can always extend a meet-semilattice to a lattice by adding only one artificial maximal element if it does not exist already.

\mypar{Examples} 
If the partial order is a total order, i.e., for all distinct $a,b\in\latt$ either $a\leq b$ or $b\leq a$, then $\latt$ is a lattice. An example is shown in Fig.~\ref{l2} with four elements.

Another special case of a lattice is the powerset (set of all subsets) of a finite set ${\cal S}$ with $\cap$ as meet and $\cup$ as join, which yields a directed hypercube as associated cover graph. Fig.~\ref{l3} shows the example for ${\cal S} = \{\alpha_1,\alpha_2,\alpha_3\}$: here, $a = \emptyset$, $b,c,d$ are the one-element subsets, $e,f,g$ the two-element subsets, and $h = {\cal S}$. Signals indexed by powersets are called set functions. We introduced an SP framework for set functions in \cite{Pueschel:18,Pueschel:20}. The work in this paper generalizes to arbitrary meet- or join-semilattices.

Not every poset with a unique minimal element is a meet-semilattice. For example, the poset in Fig.~\ref{nl} is not since the meet of $d$ and $e$ is not unique.

We will provide more concrete examples of lattices, including multiset and concept lattices, with applications in Sections~\ref{sec:FormalConceptLattices} and \ref{sec:MultisetAuctions}.

\mypar{Some properties}
The cover graph associated with a meet-semilattice is acyclic and, as said above, has a unique sink (node with outdegree 0). However, these properties are not sufficient to make a directed graph (digraph) a lattice, as shown in Fig.~\ref{nl}.

Every poset can be topologically sorted, i.e., it can be totally ordered in a way that is compatible with the partial order. In Fig.~\ref{semilattices} this is done with the lexicographic order of the node labels, meaning smaller elements in the poset come earlier in the alphabet. In general, the topological sort is not unique.

\section{Discrete-Lattice SP}

In this section we introduce discrete-lattice SP\footnote{More correct would be discrete-semilattice SP, but we opted for simplicity due to Lemma~\ref{addmax}.} (DLSP), a signal processing framework for signals, or data, associated with the elements of a given meet- or join-semilattice. DLSP could also be viewed as a form of graph SP on the special type of directed acyclic graphs that are associated with semilattices (e.g., those in Fig.~\ref{semilattices}(a)--(c)), but is based on a very different notion of shift. We discuss the differences to prior digraph SP frameworks later in Section~\ref{sec:discussion}.

Our derivation of DLSP follows the general procedure of ASP \cite{Pueschel:08a,Pueschel:06c} for deriving linear SP frameworks. We start with the shift definition, from which we then derive the associated notion of convolution or filtering, Fourier transform, frequency response, and a convolution theorem. 

In Section~\ref{sec:example} we then provide a detailed example.

\subsection{Lattice Signals and Filters}

\mypar{Lattice signals} We consider real signals indexed with the elements of a given meet-semilattice $\latt$ of size $n$:
\begin{equation}
\coord{s}:\ \latt\rightarrow\R,\quad x\mapsto s_x.
\end{equation}
We will also write $\coord{s} = (s_x)_{x\in\latt}$, considered as a column vector, where we assume a specific order in which the elements are topologically sorted from small to large. The set of signals is an $n$-dimensional vector space.

\mypar{Shifts} Our construction uses the meet operation to define a shift. Formally, for every $a\in \latt$ we define the
\begin{equation}\label{shiftdef}
\text{shift by $a$:}\quad (s_{x})_{x\in\latt}\rightarrow (s_{x\meet a})_{x\in\latt},
\end{equation}
which captures the semilattice structure. Obviously, this shift is a linear mapping on the signal space since shifting $\alpha\coord{s}+\beta\coord{s}'$ ($\alpha,\beta\in\R$) by $a$ yields $\alpha(s_{x\meet a})_{x\in\latt}+\beta(s'_{x\meet a})_{x\in\latt}$. Thus, \eqref{shiftdef} can be represented by a matrix $T_a$ such that
\begin{equation}\label{shiftdefmat}
T_a\coord{s} = (s_{x\meet a})_{x\in\latt}.
\end{equation}
Since $\meet$ is commutative, the definition implies that all shifts commute:
\begin{equation}\label{shiftprop}
T_a(T_b\coord{s}) = T_{a\meet b}\coord{s} = T_b(T_a\coord{s}),\quad\text{for }a,b\in\latt.
\end{equation}
If $\latt$ contains a maximal element $\maxel$, then $x\meet\maxel = x$ for all $x\in\latt$ and thus
\begin{equation}\label{maxshift}
T_\maxel\coord{s} = \coord{s}
\end{equation}
is the identity mapping.

\mypar{Generating shifts}
In discrete-time SP a signal can be shifted by any integer value, but each such shift can be expressed as a repeated shift by 1, i.e., the shift by 1 {\em generates} all others. The question is which shifts in our lattice are needed to generate all others. Equation \eqref{shiftprop} shows that a shift by $a\meet b$ can be done by shifting by $a$ and then by $b$ and thus is not needed as generator. Thus, as explained by lattice theory \cite{Graetzer:11}, the generators are precisely all meet-irreducible elements, i.e., those $a\in\latt$ that cannot be written as $a = b\meet c$ with $b,c\neq a$. If $\latt$ has a (necessarily) unique maximal element $\maxel$ it is not considered as meet-irreducible, since shifting by $\maxel$ is the identity (see \eqref{maxshift}).

In Fig.~\ref{l1} the basic shifts that generate all others are given by $h, e, f, g, c$. In Fig.~\ref{l2} these are all elements except $d$, and in Fig.~\ref{l3} they are $e, f, g$.

\mypar{Filters and convolution} The associated notion of convolution is obtained by linearly extending the shift operation to filters $\sum_{a\in\latt}h_aT_a$. Convolution then is defined as
$$
\coord{h}\ast\coord{s} = 
\left(\sum_{a\in\latt}h_aT_a\right)\coord{s} = \left(\sum_{a\in\latt}h_as_{x\meet a}\right)_{x\in\latt}.
$$

\mypar{Shift-invariance} Since by \eqref{shiftprop} all shifts commute, each shift also commutes with all filters. So filters in DLSP are linear shift-invariant mappings.

\subsection{Spectrum and Fourier Transform}	

\mypar{Fourier basis and frequency response} To derive the Fourier transform on lattices we first determine the Fourier basis, i.e., those signals that are simultaneous eigenvectors for all shifts (in particular the generating shifts) and thus for all filters. Note that their existence is possible since by \eqref{shiftprop} the shifts commute. Lattice theory confirms the existence of a joint eigenbasis through the zeta transform \cite{Rota:64}. Translated to our SP setting, we obtain a complete Fourier basis with $n$ vectors, each associated with a lattice element. 

\begin{theorem}[Fourier basis]\label{fourbasis}
Let $\chr{y\leq x}$ denote the characteristic (or indicator) function of $y\leq x$: 
$$
\chr{y\leq x} = \chr{y\leq x}(x,y) = \begin{cases}1, &  y\leq x,\\ 0, & \text{else}.\end{cases}
$$
For each $y\in\latt$, the vector
\begin{equation}\label{purefreq}
\coord{f}^y = (\chr{y\leq x})_{x\in\latt}
\end{equation}
is an eigenvector of all shift matrices $T_a$, $a\in\latt$. Specifically,
\begin{equation}\label{evs}
	T_a\coord{f}^y = (\chr{y\leq x\meet a})_{x\in\latt} = 
	\begin{cases}1\cdot\coord{f}^y=\coord{f}^y, & y\leq a,\\ 0\cdot\coord{f}^y=\coord{0}, & y\not\leq a.\end{cases}
\end{equation}
Further, the $\coord{f}^y$ are linearly independent and thus form a basis.
\end{theorem}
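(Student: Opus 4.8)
The plan is to verify the eigenvector claim directly from the definition of the shift in \eqref{shiftdefmat}, and then to establish linear independence by exhibiting a triangular structure under a topological ordering.

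First I would compute $T_a\coord{f}^y$ componentwise. By \eqref{shiftdefmat}, the $x$-th entry of $T_a\coord{f}^y$ is $(\coord{f}^y)_{x\meet a} = \chr{y\leq x\meet a}$. The heart of the argument is the observation that $y\leq x\meet a$ holds if and only if $y\leq x$ and $y\leq a$: the forward direction follows from $x\meet a\leq x$, $x\meet a\leq a$, and transitivity, while the reverse direction is exactly the defining property of $x\meet a$ as the \emph{greatest} lower bound of $x$ and $a$. Hence $\chr{y\leq x\meet a}=\chr{y\leq x}\cdot\chr{y\leq a}$, which equals $\chr{y\leq x}$ when $y\leq a$ and equals $0$ for all $x$ otherwise. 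This is precisely \eqref{evs}, so $\coord{f}^y$ is a common eigenvector of all $T_a$ with eigenvalue $1$ if $y\leq a$ and $0$ if $y\not\leq a$, and therefore a common eigenvector of all filters by linearity.

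For linear independence I would fix a topological sort of $\latt$ (which exists, as noted after Fig.~\ref{semilattices}) and consider the $n\times n$ matrix $F$ whose columns are the $\coord{f}^y$, so that its $(x,y)$ entry is $\chr{y\leq x}$. The diagonal entries are $\chr{x\leq x}=1$. If $x$ strictly precedes $y$ in the chosen order, then $y\leq x$ would force $y<x$ and hence $y$ to precede $x$, a contradiction; so the $(x,y)$ entry vanishes there. Thus $F$ is lower triangular with unit diagonal, whence $\det F = 1\neq 0$, and the $n$ vectors $\coord{f}^y$ form a basis of the $n$-dimensional signal space.

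I do not expect a serious obstacle here. The only points requiring care are keeping the row/column conventions consistent in the triangularity argument and making explicit the one nontrivial order-theoretic fact $y\leq x\meet a\iff(y\leq x\text{ and }y\leq a)$, which is immediate from the greatest-lower-bound characterization of the meet; everything else is bookkeeping.
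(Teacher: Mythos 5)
Your proposal is correct and follows essentially the same route as the paper's proof: verify the eigenvalue equation via the equivalence $y\leq x\meet a\iff(y\leq x\text{ and }y\leq a)$ (the paper splits this into the two cases $y\leq a$ and $y\not\leq a$), then obtain linear independence from the unit lower-triangular structure of the eigenvector matrix under a topological sort. No gaps.
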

Before we start the proof we note that $\coord{f}^y$ consists only of entries 1 and 0, depending on the condition $y\leq x$. In particular, for the unique minimal element $\minel\in\latt$,
$$
\coord{f}^\minel = (\chr{\minel\leq x})_{x\in\latt} = (1, 1, \dots ,1)^T
$$
is the constant vector.
\begin{proof}
Let $y,a\in\latt$, then
$$
T_a\coord{f}^y = (\chr{y\leq x\meet a})_{x\in\latt}.
$$
Case 1: $y\leq a$. Then $y\leq x\meet a\Leftrightarrow y\leq x$ ($\Rightarrow$ is obvious; for $\Leftarrow$, $y\leq a$ and $y\leq x$ implies $y\leq x\meet a$) and we get
$$
(\chr{y\leq x\meet a})_{x\in\latt} = (\chr{y\leq x})_{x\in\latt} = \coord{f}^y.
$$

Case 2: $y\not\leq a$. Then $y\leq x\meet a$ never holds and we get
$$
(\chr{y\leq x\meet a})_{x\in\latt} = \coord{0}.
$$

Finally, we note that the eigenmatrix $(\coord{f}^y)_{y\in\latt}$ is lower triangular assuming the columns are ordered by a topological sort of $\latt$. The diagonal entries $\chr{y\leq y}$ are all $=1$ and thus the $\coord{f}^y$ are linearly independent and form a Fourier basis.
\end{proof}
Equation~\eqref{evs} shows that the frequency response of a shift at frequency $y$ is either $1$ or $0$. By linear extension, the frequency response of a filter $\coord{h} = (h_a)_{a\in\latt}$ at frequency $y$ is computed as
\begin{equation}\label{freqresp}
\fr{h}_y = \sum_{a\in\latt,a\geq y}h_a.
\end{equation}
Note that frequency response and Fourier transform are computed differently.

\mypar{Frequency ordering}
An important aspect for a useful SP framework is the ordering of frequencies. We build on the idea in \cite{Sandryhaila:13}, which relates total variation and shift in graph SP. Here we need an extension to our multiple-shift setting. We first denote the set of meet-irreducible elements (corresponding to the generating shifts) as $\gens = \{g_1,\dots,g_k\}\subset \latt$.

\begin{definition}
	We define the total variation of a signal $\coord{s}$ w.r.t. a given shift $g\in G$ as
	\begin{equation}\label{tvdefg}
	\TV_{g}(\coord{s}) = ||\coord{s} - T_{g}\coord{s}||_2.
	\end{equation}
	Further, we define the total variation as the $k$-tuple of the individual $\TV_g$:
	\begin{equation}\label{tvdef}
	\TV(\coord{s}) = (\TV_{g}(\coord{s}))_{g\in\gens}.
	\end{equation}
	Since for applications it is convenient to only have one number, we also define the sum total variation as
	\begin{equation}\label{stvdef}
	\STV(\coord{s}) = \sum_{g\in\gens} \TV_{g}(\coord{s}).
	\end{equation}
\end{definition}

\begin{theorem}\label{tv-th}
	The total variation of the Fourier basis vector $\coord{f}^y$, $y\in\latt$, when normalized to $||\coord{f}^y||_2 = 1$, is given by the $k$-tuple
	\begin{equation}\label{tv}
	\TV(\coord{f}^y) = (\chr{y\not\leq g})_{g\in\gens} 
	\end{equation}
	and thus
	$$
	\STV(\coord{f}^y) = |\{g\in\gens\mid y\not\leq g\}|.
	$$
	In particular
	$$
	0\leq\STV(\coord{f}^y)\leq |\gens| = k.
	$$
	Further, the set of total variations $\TVs = \{\TV(\coord{f}^y)\mid y\in\latt\}$ is a poset w.r.t.~componentwise comparison and isomorphic to $\latt$: for $x,y\in\latt$,
	\begin{equation}\label{iso}
	x\leq y\text{ if and only if }\TV(\coord{f}^x)\leq \TV(\coord{f}^y),
	\end{equation}
	which means the frequencies inherit the partial order of $\latt$. In particular, the lowest frequency is $\coord{f}^\minel$ with $\TV(\coord{f}^\minel) = (0,\dots,0)$. $\STV$ yields a topological sorting of $\TV$.
\end{theorem}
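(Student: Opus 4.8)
The plan is to take the assertions in turn, reducing the only non-routine one --- the order isomorphism \eqref{iso} --- to a classical structural fact about finite semilattices.

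\emph{Total-variation formula.} I would read \eqref{tv} straight off Theorem~\ref{fourbasis}. Shifts are linear, so rescaling $\coord{f}^y$ to unit norm leaves the eigenvalue structure \eqref{evs} intact (and $||\coord{f}^y||_2>0$ since the entry indexed by $x=y$ is $\chr{y\leq y}=1$). By \eqref{evs}, $\coord{f}^y-T_g\coord{f}^y$ equals $\coord{0}$ when $y\leq g$ and equals $\coord{f}^y$ when $y\not\leq g$; hence after normalization $\TV_g(\coord{f}^y)$ is $0$ or $1$ accordingly, which is \eqref{tv}. Summing over $g\in\gens$ gives $\STV(\coord{f}^y)=|\{g\in\gens\mid y\not\leq g\}|$, so $0\leq\STV(\coord{f}^y)\leq k$ is immediate, and evaluating at the minimum uses $\minel\leq g$ for every $g$ to get $\TV(\coord{f}^\minel)=(0,\dots,0)$.

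\emph{Easy direction of \eqref{iso}.} If $x\leq y$, then for every $g\in\gens$, $y\leq g$ implies $x\leq g$ by transitivity; contrapositively $\chr{x\not\leq g}\leq\chr{y\not\leq g}$, i.e.\ $\TV(\coord{f}^x)\leq\TV(\coord{f}^y)$ componentwise.

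\emph{Converse of \eqref{iso} --- the crux.} Assume $\TV(\coord{f}^x)\leq\TV(\coord{f}^y)$, equivalently $y\leq g\Rightarrow x\leq g$ for every meet-irreducible $g$. I would invoke the classical fact that in a finite meet-semilattice every element is the meet of the meet-irreducibles above it,
$$
y=\bigmeet\{\,g\in\gens\mid g\geq y\,\},
$$
with the empty meet read as $\maxel$ when $\latt$ has a top (cf.\ Lemma~\ref{addmax}). Given this, $x$ is a lower bound of $\{g\in\gens\mid g\geq y\}$ by the hypothesis, and since $y$ is the \emph{greatest} such lower bound, $x\leq y$. For self-containedness I would include the short proof of the structural fact: writing $z$ for the meet on the right, clearly $y\leq z$; if $y<z$, pick by finiteness an element $m$ maximal subject to $y\leq m$ and $z\not\leq m$ (admissible since $m=y$ works, as $z\not\leq y$). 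Then $m$ is meet-irreducible: if $m=a\meet b$ with $a,b>m$, maximality of $m$ forces $z\leq a$ and $z\leq b$, hence $z\leq a\meet b=m$, contradicting $z\not\leq m$. But a meet-irreducible $m\geq y$ lies in the index set defining $z$, so $z\leq m$ --- a contradiction; hence $y=z$. The bookkeeping around a possible top is absorbed by the empty-meet convention. I expect this structural step to be the main obstacle; everything else is formal.

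\emph{Isomorphism and topological sort.} Componentwise comparison is a partial order, so $\TVs$ is a poset. The map $y\mapsto\TV(\coord{f}^y)$ is surjective onto $\TVs$ by definition, injective because $\TV(\coord{f}^x)=\TV(\coord{f}^y)$ forces $x\leq y$ and $y\leq x$ by \eqref{iso}, and order-preserving in both directions again by \eqref{iso}; thus $\latt\cong\TVs$ as posets. Finally, if $x<y$ then \eqref{iso} gives $\TV(\coord{f}^x)\leq\TV(\coord{f}^y)$ while injectivity gives $\TV(\coord{f}^x)\neq\TV(\coord{f}^y)$, so at least one coordinate strictly decreases and $\STV(\coord{f}^x)<\STV(\coord{f}^y)$; hence ordering $\latt$ by increasing value of $\STV(\coord{f}^y)$ (ties broken arbitrarily) is a linear extension of $\leq$, i.e.\ a topological sort of $\latt$ (equivalently of $\TVs$).
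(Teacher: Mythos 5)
Your proof is correct and follows essentially the same route as the paper: read off the $0/1$ total variations from \eqref{evs}, obtain the easy direction of \eqref{iso} by transitivity, and obtain the converse from the fact that every element is the meet of the meet-irreducibles above it. The only difference is that you prove this structural fact in full (the paper simply asserts it, citing standard lattice theory) and spell out the isomorphism and $\STV$ topological-sort bookkeeping that the paper leaves implicit.
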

\begin{proof}
$TV_g(\coord{f}^y) = ||\coord{f}^y - T_g\coord{f}^y||_2$. Using \eqref{evs}, this yields 1 if $y\not\leq g$ and 0 else, which proves \eqref{tv}.

For \eqref{iso} let $x\leq y$ and let $g\in\gens$. Then $y\leq g$ implies $x\leq g$, i.e., $x\not\leq g$ implies $y\not\leq g$, which yields $\TV_g(\coord{f}^x) = (\chr{x\not\leq g})_{g\in\gens}\leq(\chr{y\not\leq g})_{g\in\gens} = \TV_g(\coord{f}^y)$. 

For the converse we first note that every element in $y\in\latt$ can be written as the meet of $g\in\gens$, which are necessary those larger than $y$: $y = \bigmeet_{g\in\gens,y\leq g}g$. Since $\TV(\coord{f}^x)\leq\TV(\coord{f}^y)$ means that $y\leq g$ implies $x\leq g$, we obtain $x\leq y$ as desired.
\end{proof}

We note that the defined frequency ordering is, in a sense, unique, since it is not affected by possible variants in the definition of total variation. First, \eqref{tv} stays the same if the the $2$-norm in \eqref{tvdefg} is replaced by any $p$-norm, $1\leq p < \infty$, and the Fourier basis vectors are normalized w.r.t.~the same norm. Second, since the total variation of a Fourier basis vector in \eqref{tv} consists only of 0s and 1s, the total ordering of the sum total variations is not affected if the definition in \eqref{stvdef} is replaced by a squared sum or any $p$-norm, $1\leq p < \infty$.

\mypar{Fourier transform} We denote the spectrum (Fourier coefficients) of a signal $\coord{s}$ as $\ft{\coord{s}} = (\ft{s}_y)_{y\in\latt}$. Equation~\eqref{purefreq} shows that the inverse Fourier transform is given by
\begin{equation}\label{invdlt}
s_x = \sum_{y\in\latt}(\chr{y\leq x})\ft{s}_y
= \sum_{y\leq x}\ft{s}_y,\quad x\in\latt.
\end{equation}
This equation is inverted using the classical Moebius inversion formula (see \cite[Sec.~3]{Rota:64}) and yields the associated Fourier transform that we call {\em discrete lattice transform} ($\DLT$ or $\DLT_\latt$):
\begin{equation}\label{dlt}
\ft{s}_y = \sum_{x\leq y}\mu(x,y)s_x.
\end{equation}
Here, $\mu$ is the Moebius function, defined recursively as
\begin{align*}
\mu(x,x) & = 1, & \text{for }x\in\latt,\\
\mu(x,y) & = -\sum_{x\leq z < y}\mu(x,z), & x\neq y.
\end{align*}
In matrix form,
$$
\ft{\coord{s}} = \DLT_\latt\coord{s},\quad\text{with }\DLT_\latt = [\mu(x,y)\chr{x\leq y}]_{y,x\in\latt}.
$$

\begin{figure*}\centering
	\subcaptionbox{}[0.24\linewidth] 
	{
		\includegraphics[width=\linewidth]{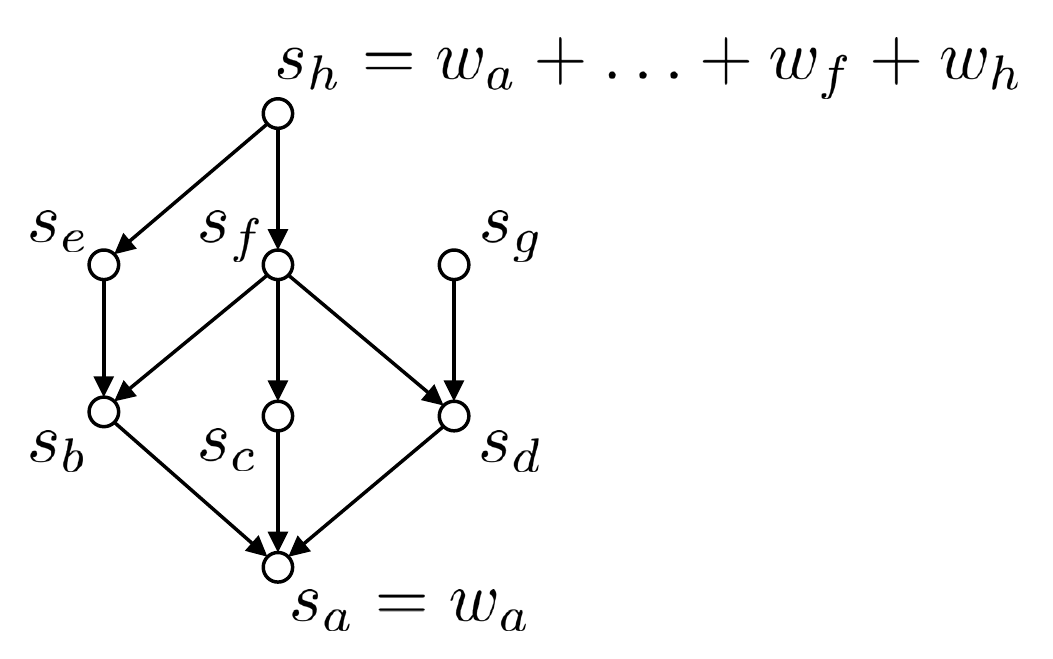}
	}
	\subcaptionbox{}[0.30\linewidth] 
	{ \resizebox{!}{14mm}{
			\begin{tabular}{@{}c@{\,}l@{}}\toprule
				\multicolumn{2}{@{}l}{Signal}\\
				\midrule 
				$s_h$ & $= w_a + w_b + w_c + w_d + w_e + w_f + w_h$ \\
				$s_g$ & $ = w_a + w_d + w_g$ \\
				$s_f$ & $ = w_a + w_b + w_c + w_d + w_f$ \\
				$s_e$ & $ = w_a + w_b + w_e$ \\
				$s_d$ & $ = w_a + w_d$ \\
				$s_c$ & $ = w_a + w_c$ \\
				$s_b$ & $ = w_a + w_b$ \\
				$s_a$ & $ = w_a$ \\
				\bottomrule
			\end{tabular}
		}
	}
	\subcaptionbox{}[0.22\linewidth] 
	{
		\includegraphics[width=\linewidth]{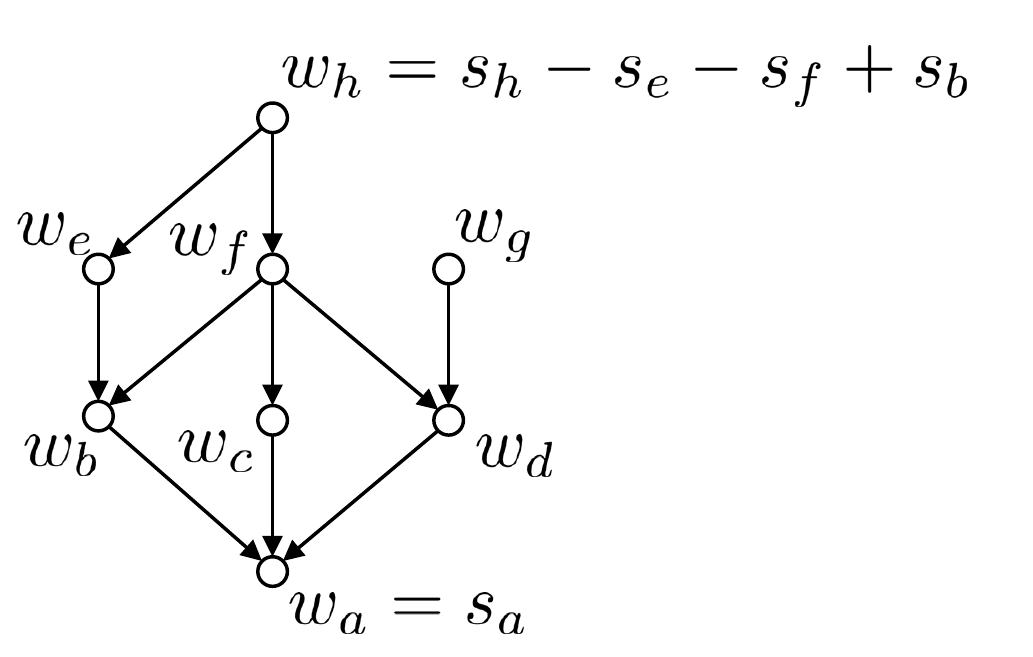}
	}
	\subcaptionbox{}[0.20\linewidth] 
	{ \resizebox{!}{14mm}{
			\begin{tabular}{@{}c@{\,}l@{}}\toprule
				\multicolumn{2}{@{}l}{Spectrum}\\
				\midrule 
				$w_h$ & $ = s_h - s_f - s_e + s_b$ \\
				$w_g$ & $ = s_g - s_d$ \\
				$w_f$ & $ = s_f - s_d - s_c - s_b + 2 s_a$ \\
				$w_e$ & $ = s_e - s_b$ \\
				$w_d$ & $ = s_d - s_a$ \\
				$w_c$ & $ = s_c - s_a$ \\
				$w_b$ & $ = s_b - s_a$ \\
				$w_a$ & $ = s_a$\\
				\bottomrule
			\end{tabular}
		}
	}
	\caption{Intuition for spectrum and Fourier transform. The inverse DLT computes the signal values in (a) by summing the weights of all predecessors (b). The DLT recovers the weights (c) from the signal values (d).}
	\label{fig:IntuitionFourierSpectrum}
\end{figure*}

\begin{table*}\centering
	\ra{1.2}
	\begin{tabular}{@{}lll@{}}\toprule
		Concept & Meet-semilattice & Join-semilattice\\ \midrule
		Shift by $q\in\latt$ & $(s_{x\meet q})_{x\in\latt}$ & $(s_{x\join q})_{x\in\latt}$ \\
		Generating shifts & meet-irr. $q\in\latt$ & join-irr. $q\in\latt$ \\
		Convolution & $\sum_{q\in\latt}h_qs_{x\meet q}$ & $\sum_{q\in\latt}h_qs_{x\join q}$ \\
		DLT $\ft{s}_y =$ & $\sum_{x\leq y}\mu(x,y)s_x$ & $\sum_{x\geq y}\mu(y,x)s_x$ \\
		Inverse DLT $s_x = $ & $\sum_{y\leq x}\ft{s}_y$ & $\sum_{y\geq x}\ft{s}_y$\\
		Frequency resp. $\fr{h}_y=$ & $\sum_{x\geq y} h_x$ & $\sum_{x\leq y} h_x$ \\
		Inverse frequency resp. $h_x=$ & $\sum_{y\geq x}\mu(x,y)\fr{h}_y$ & $\sum_{y\leq x}\mu(y,x)\fr{h}_y$\\
		Total variation & $\TV(\coord{f}^y) = (\chr{y\not\leq g})_{g\in\gens}$ & $\TV(\coord{f}^y) = (\chr{y\not\geq g})_{g\in\gens}$ \\
		Frequency ordering & $x\leq y\Leftrightarrow\TV(\coord{f}^x)\leq \TV(\coord{f}^y)$ & $x\leq y\Leftrightarrow\TV(\coord{f}^x)\geq \TV(\coord{f}^y)$\\
		Lowest frequency & $\coord{f}^\minel = (1,\dots,1)^T$ & $\coord{f}^\maxel = (1,\dots,1)^T$\\
		\bottomrule
	\end{tabular}
	\caption{Basic DLSP concepts for meet- and join-semilattices.\label{summary}}
\end{table*}

\mypar{Inverse frequency response}
The spectrum $\ft{\coord{s}}$ in \eqref{dlt} and the frequency response $\fr{\coord{h}}$ in \eqref{freqresp} are computed differently. However, \eqref{freqresp} is similar to the inverse DLT in \eqref{invdlt} and thus can also be inverted using the Moebius inversion formula, which now takes the form (see \cite[Sec.~3, Cor.~1]{Rota:64})
\begin{equation}\label{invfreqresp}
h_x = \sum_{y\geq x}\mu(x,y)\fr{h}_y,\quad x\in\latt.
\end{equation}
Since it is invertible, every frequency response can be realized by a suitably chosen filter. In particular, the trivial filter exists that leaves every signal unchanged, i.e., with $\fr{h}_y = 1$ for all $y\in\latt$. In other words, there is a linear combination of the $T_a$, $a\in\latt$, that is the identity matrix. If $\latt$ has a unique maximum $\maxel$, then the trivial filter is $T_\maxel$. 

\mypar{Convolution theorem} The preceding results yield the following convolution theorem:
$$
\ft{\coord{h}\ast\coord{s}} = \fr{\coord{h}}\odot\ft{\coord{s}},
$$
where $\odot$ denotes pointwise multiplication.

\mypar{Fast algorithms} Fourier transform, frequency response, and their inverses can be computed in $O(kn)$ many operation where $k = |\gens|$ \cite{Bjorklund:15,Kaski.Kohonen.Westerbaeck:2016a}. In some cases this can be further improved to $O(|\edges|)$, where $|\edges|$ is the number of covering pairs.

\subsection{Intuition}\label{sec:intuition}

We now provide some intuition for the meaning of spectrum and Fourier transform of a signal $\coord{s}$. We call $y$ a predecessor of $x\in\latt$ if $y\leq x$. Suppose there exists an (unknown) weight or cost function on the lattice
\begin{equation}\label{eq:SignedMeasure}
	\begin{split}
		w \colon \latt &\longrightarrow \mathbb{R}, \\
		x &\mapsto w(x) = w_x, 
	\end{split}
\end{equation}
which defines a signed measure \cite[chap.~IX]{Doob:94} on $\latt$, i.e., it can be extended to subsets ${\cal S}$ of $\latt$ as $w({\cal S}) = \sum_{y\in{\cal S}} w_y$. 

Assume that each observed signal value $s_x$ is the sum of the weights of all predecessors
\begin{equation}
	\label{eq:CumulativeSignal}
	s_x = w(\{y\mid y\leq x\}) = \sum_{y \leq x} w_y. 
\end{equation}
Comparing to \eqref{invdlt} shows that $\ft{\coord{s}} = \coord{w}$, i.e., the Fourier transform recovers the unknown weights of elements from the observed signal values. A Fourier coefficient $\ft{s}_x = w_x = 0$ thus implies that
$$
s_x = \sum_{y \leq x} w_y = \sum_{y < x} w_y,
$$
i.e., the signal value is determined already by the weights of the (strictly) smaller elements. If $x$ covers only one element $x'$, i.e., has only one direct predecessor, then $s_x = s_{x'}$ in this case.

Note that $\coord{w}$ cannot be computed through simple differences but the computation is more involved depending on the lattice structure as the formula in \eqref{dlt} indicates (and related to the inclusion-exclusion principle in set theory). In Fig.~\ref{fig:IntuitionFourierSpectrum} we show an example. Note, for instance, that $w_f = s_f - s_d - s_c - s_b + 2 s_a$ since $b,c,d$ have the common predecessor $a$.

The inverse DLT \eqref{eq:SignedMeasure} can be viewed as an integral operator on $\latt$, which makes the DLT the associated differentiation operator \cite[p.~347]{Rota:64}. Indeed, for the totally ordered lattice (e.g., Fig.~\ref{l2}), the DLT reduces to the standard difference operator that computes the difference of successive values.

With this discussion we can also revisit the shift operation in \eqref{shiftdef}. The signal value at index $x\in\latt$, when shifted by $a\in\latt$ is
\begin{equation}
	\label{eq:ShiftIntuition}
	(T_a \coord{s})_x
	= s_{x \meet a}
	= \sum_{y \leq x\text{ and }y \leq a} w(y),
\end{equation}
i.e., it is the sum of the weights of all lattice elements that are predecessors for both $a$ and $x$.

\subsection{Summary and Join-Semilattices}

For convenience we collect the main concepts of DLSP in Table~\ref{summary} together with the dual framework for join-semilattices whose derivation is completely analogous. The two are not fundamentally different but are transposes of each other in the following sense. Every meet-semilattice can be converted into a join-semilattice by inverting the order relation, i.e., the arrows in the cover graph. Meet-irreducible element become join-irreducible this way, and the shift matrices of the latter are transposes of those of the former. As a result, as Table~\ref{summary} shows, the Fourier transform matrices and frequency response matrices are also transposes of each other.

The only major difference is in the frequency ordering, which in the case of a join-semilattice is inverted. This means $x\leq y$ implies that $\TV(\coord{f}^x)\geq \TV(\coord{f}^y)$. In particular, $\coord{f}^\maxel$ is the lowest frequency, which makes sense since it is the constant, all-one vector.

\section{Example}\label{sec:example}

In this section we illustrate and instantiate the DLSP concepts using the small lattice previously shown in Fig.~\ref{l1}.

\mypar{Lattice signal} A lattice signal is shown in Fig.~\ref{exsignal}. It associates a real number $s_x$ with every lattice element~$x$.

\begin{figure*}\centering
\subcaptionbox{Signal on the lattice\label{exsignal}}[0.22\linewidth]
{\includegraphics[scale=0.45]{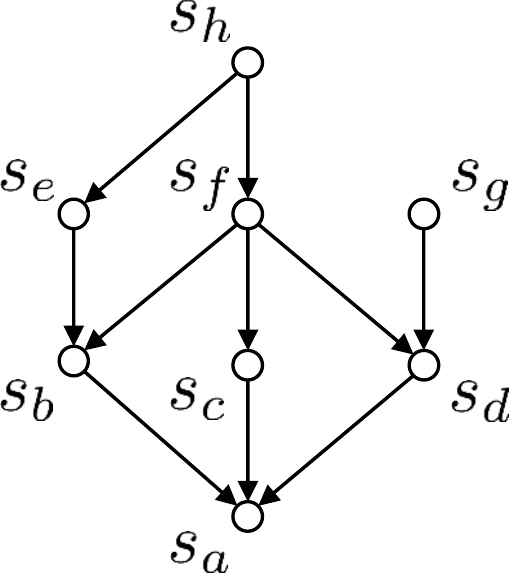}}
\hfill
\subcaptionbox{Basic low pass filter $\coord{h}$\label{exlowpass}}[0.22\linewidth]
{\includegraphics[scale=0.45]{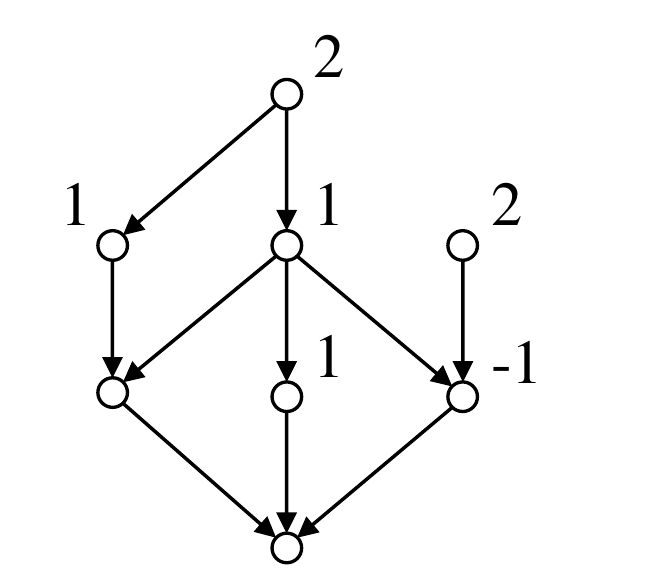}}
\hfill
\subcaptionbox{Frequency response of the filter $\coord{h}$ in (b)\label{exfreqresp}}[0.22\linewidth]
{\includegraphics[scale=0.45]{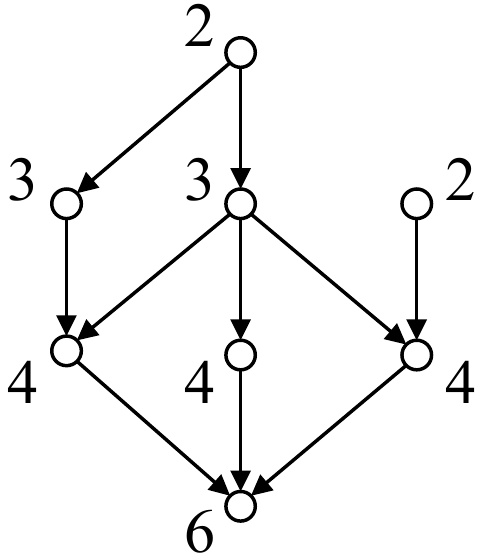}}	
\hfill
\subcaptionbox{Partial ordering of frequencies by total variation\label{extvs}}[0.22\linewidth]
{\includegraphics[scale=0.45]{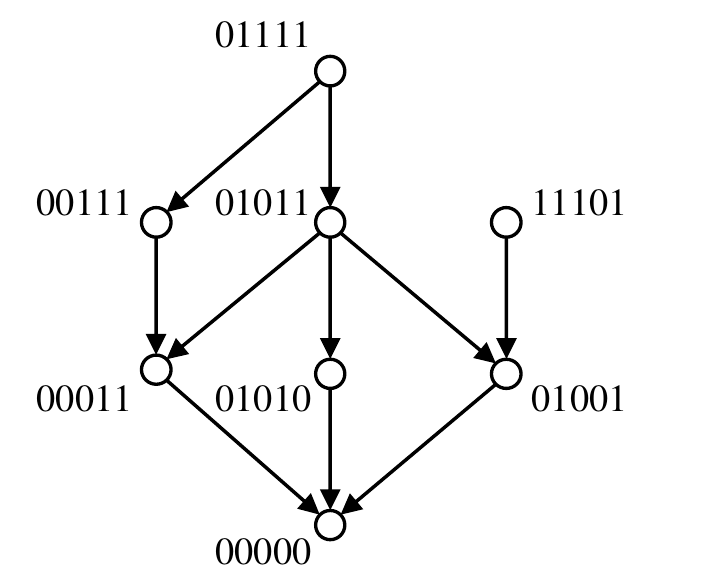}}
\caption{Basic concepts for the example lattice $\latt$ in Fig.~\ref{l1}.\label{exfig}}
\end{figure*}

\mypar{Generating shifts} As mentioned before, the lattice is generated by five shifts: $\gens=\{h,e,f,g,c\}$. For example shifting by $e$ maps
$$
\coord{s}\mapsto T_e\coord{s} = (s_{x\meet e})_{x\in\latt}
= (s_a, s_b, s_a, s_a, s_e, s_b, s_a, s_e)^T,
$$
i.e., $T_e$ is the matrix
$$
T_e = 
\left[
\scriptsize
\ra{1.0}
\begin{array}{*{7}{r@{\quad}}r}
1 & 0 & 0 & 0 & 0 & 0 & 0 & 0\\
0 & 1 & 0 & 0 & 0 & 0 & 0 & 0\\
1 & 0 & 0 & 0 & 0 & 0 & 0 & 0\\
1 & 0 & 0 & 0 & 0 & 0 & 0 & 0\\
0 & 0 & 0 & 0 & 1 & 0 & 0 & 0\\
0 & 1 & 0 & 0 & 0 & 0 & 0 & 0\\
1 & 0 & 0 & 0 & 0 & 0 & 0 & 0\\
0 & 0 & 0 & 0 & 1 & 0 & 0 & 0
\end{array}
\right].
$$
Note that the number of generating shifts $|\gens|$ appears large compared to $|\latt|$. This is due to the small scale of the example. In the subset lattice (Fig.~\ref{l3}), $|\gens| = \log_2(|\latt|)$.

\mypar{Filters and convolution}
A generic filter is given by $\coord{h}=(h_a)_{a\in\latt}$. First we identify the trivial filter $T$ with $T\coord{s} = \coord{s}$ for all $\coord{s}$, i.e., with a constant-one frequency response. Using \eqref{invfreqresp} with $\fr{h}_y=1$ for all $y\in\latt$, we get 
$$
T = T_g + T_h - T_d.
$$

As an example filter, we use what one might expect (and we will confirm below) as a basic low-pass filter (analogous to $h(z) = 1 + z_1^{-1} + z_2^{-1}$ in the $z$-domain for two-dimensional SP) that sums the trivial filter $T$ and all generating shifts (see Fig.~\ref{exlowpass}):
\begin{equation}\label{lowpassmat}
T + \sum_{g\in\gens}T_g = 2T_h + 2T_g + T_f + T_e - T_d + T_c,
\end{equation}
i.e.,
\begin{equation}\label{lowpass}
\coord{h} = (0,0,1,-1,1,1,2,2)^T.
\end{equation}

\mypar{Fourier basis and frequency response}
The Fourier basis vectors are computed with \eqref{purefreq} and are indexed again by $y\in\latt$. Collected in a matrix using the previous ordering of $y\in\latt$, they take the form
\begin{equation}\label{exfreq}
\left[
\scriptsize
\ra{1.0}
\begin{array}{*{8}{r@{\quad}}r}
1 & 0 & 0 & 0 & 0 & 0 & 0 & 0\\
1 & 1 & 0 & 0 & 0 & 0 & 0 & 0\\
1 & 0 & 1 & 0 & 0 & 0 & 0 & 0\\
1 & 0 & 0 & 1 & 0 & 0 & 0 & 0\\
1 & 1 & 0 & 0 & 1 & 0 & 0 & 0\\
1 & 1 & 1 & 1 & 0 & 1 & 0 & 0\\
1 & 0 & 0 & 1 & 0 & 0 & 1 & 0\\
1 & 1 & 1 & 1 & 1 & 1 & 0 & 1
\end{array}
\right].
\end{equation}
The frequency response of $\coord{h}$ in \eqref{lowpass} is computed with \eqref{freqresp} and the result is shown in Fig.~\ref{exfreqresp}. We observe that higher frequencies are attenuated, so indeed $\coord{h}$ is a low-pass filter. We note that the filter would also be low-pass if $T$ in \eqref{lowpassmat} is omitted or if only a subset of the coefficients $h_g,g\in\gens$ would be set to 1.

\mypar{Frequency ordering}
The total variation is a five-tuple for this case, one element per generator $\in\gens$. As explained, these five-tuples form a lattice isomorphic to $\latt$, shown in Fig.~\ref{extvs}, partially ordered by componentwise comparison of five-tuples. Smaller elements have a lower sum total variation. The smallest is 0 for the constant Fourier basis vector $\coord{f}^a = \coord{f}^\minel$, and the largest is 4 for the two maximal elements.

\mypar{Fourier transform} The DLT in matrix form is obtained via the Moebius inversion formula \eqref{dlt} or by inverting the matrix in \eqref{exfreq}:
\begin{equation}\label{exft}
\DLT_\latt =
\left[
\scriptsize
\ra{1.0}
\begin{array}{*{8}{r@{\quad}}r}
 1 &  0 &  0 &  0 &  0 &  0 &  0 &  0\\
-1 &  1 &  0 &  0 &  0 &  0 &  0 &  0\\
-1 &  0 &  1 &  0 &  0 &  0 &  0 &  0\\
-1 &  0 &  0 &  1 &  0 &  0 &  0 &  0\\
 0 & -1 &  0 &  0 &  1 &  0 &  0 &  0\\
 2 & -1 & -1 & -1 &  0 &  1 &  0 &  0\\
 0 &  0 &  0 & -1 &  0 &  0 &  1 &  0\\
 0 &  1 &  0 &  0 & -1 & -1 &  0 &  1
\end{array}
\right]
\end{equation}
It diagonalizes the matrix representation of every filter and every shift. For example
$$
\DLT_\latt\cdot T_e\cdot\DLT_\latt^{-1} = \diag(1,1,0,0,1,0,0,0),
$$
and for the low-pass filter $\coord{h}$ in \eqref{lowpass},
$$
\DLT_\latt\cdot\left(\sum_{a\in\latt}h_aT_a\right)\cdot\DLT_\latt^{-1} =
\diag(6,4,4,2,3,3,2,2).
$$
Computing the spectrum of a signal $\coord{s}$ by multiplying with $\DLT_\latt$ corresponds to Fig.~\ref{fig:IntuitionFourierSpectrum}(d).

\section{Properties and Relation to GSP}\label{sec:discussion}

We briefly summarize salient properties of DLSP and discuss the difference to GSP.

\mypar{Properties}
DLSP is combinatorial in nature in that it captures the partial order structure of the index domain. The combinatorial structure shows in the pure frequencies and their total variations, which both consists of 0 and 1 entries only.

The DLT is triangular and thus not orthogonal, so no Parseval identity holds in DLSP. This is not uncommon in SP where so-called biorthogonal transforms have been studied and used in the literature.

A meet-semilattice does not have a unique maximum in general. There are at least three arguments for adding an artificial such maximum. First, adding such an element (if not yet available) makes $\latt$ a lattice, i.e., a simultaneous meet- and join-semilattice. This means that two types of shifts, and thus convolutions, and DLTs are then available. Mixing these into one SP is nontrivial and a possible topic for future research. In particular, meet- and join-shifts cannot be diagonalized simultaneously since they do not commute. Second, if $\latt$ contains a unique maximum $\maxel$, the trivial filter is simply given by $T_\maxel$, which is the identity matrix. Finally, we will explain in Section~\ref{sec:subsetlatt} that by adding a maximum we can encode lattice elements as bit vectors in a way that the bitwise-and corresponds to the meet operation.

It is intriguing that DLSP provides a {\em partial} ordering of frequencies. However, the same occurs in two-dimensional separable SP based on the separable DFT. Applied to an image, for example, yields a two-dimensional array of frequencies that can also be viewed as partially ordered.

Note that the Fourier transform and frequency response are calculated differently. This is not surprising as they have different algebraic roles \cite{Pueschel:08a} and this also happens in graph SP.

\mypar{Difference to GSP} DLSP is different from graph SP in several fundamental ways. 

First, lattices, if represented as cover graphs, constitute a very special subclass of directed acyclic graphs. Note that directed acyclic graphs are troublesome in graph SP since the adjacency matrix has only
the eigenvalue~0 (assuming no self-loops) and is not (or rather far from) diagonalizable with large Jordan blocks. DLSP, in contrast, provides a proper Fourier eigenbasis.

The reason is a very different notion of shift. In GSP, the commonly used Laplacian or adjacency shift move values to or from the nearest neighbors. The DLSP shifts, in that sense, operate at larger distances, as determined by the lattice structure. For example, shifting the value $s_e$ in Fig.~\ref{exsignal} by $g$ yields $s_{e\meet g} = s_a$. However, the shifts still obey a notion of nearest. Namely, assume we say that $x$ is a predecessor of $y$ if $x\leq y$, then the index of $s_{e\meet g}$ is the nearest predecessor of both $e$ and $g$.

Further, unlike in GSP and as consequence of the lattice structure, DLSP is based on not one but several basic (generating) shifts that are needed to generate the filter algebra. In fact, every linear SP framework based on one generating shift can be viewed as a form of graph SP \cite[p.~56]{Pueschel:06c} and vice-versa.

One may wonder, which shift or shifts are the best one to choose. There is no simple answer as even in GSP a number of different shifts have been proposed (e.g., the six listed in \cite[Table I]{Anis:16}). Each choice yields a different choice of associated Fourier basis to represent signals and thus can reveal different forms of structure like sparsity or uneven energy distribution in the spectrum.

Finally, we note that an advantageous feature of DLSP is that the Fourier basis and Fourier transform have closed forms and thus do not require an eigendecomposition for their construction. This should enable the scaling to lattices with millions of elements; Fourier analysis with GSP is to date restricted to graphs with only thousands of nodes.

\mypar{Other related work} Lattice theory \cite{Graetzer:11} is well-developed but does not discuss convolution or Fourier analysis. Fast Fourier transforms on groups and other algebraic structures are a classical topic in computer science. Most closely related our work in this domain is \cite{Bjorklund:15} (and some of the references therein). The authors derive fast lattice Fourier transform algorithms in the sense introduced here, but not a complete SP framework. 

The simplicial complexes considered as index domain in \cite{Barbarossa:20} are meet-semilattices but possess additional topological structure that the authors use to define an SP framework very different from ours. 

We show later in Section~\ref{sec:FormalConceptLattices} that DLSP can be applied to hypergraphs, but in an indirect way that is different from the tensor-based approach in \cite{Zhang:20}.

The powerset is a lattice with a directed hypercube as cover graph. Powerset signals, usually called set functions, have a rich set of applications in machine learning and beyond \cite{Krause:14}. We introduced an SP framework for set functions in \cite{Pueschel:20}, part of which can be seen as special case of this paper. However, the special structure of powersets provide additional shifts (namely, the set difference has no generalization to lattices) and thus SP frameworks and an interpretation as so-called coverage functions. 

Finally, \cite{Riess.Hansen:2020} used the lattice convolution from our preliminary \cite{Pueschel:19} to define lattice neural networks.

\section{Sampling Theorem}\label{sec:sampling}

We now extend DLSP with a sampling
theorem, first presented in~\cite{Wendler:19}, for the perfect
reconstruction of lattice signals that are sparse in the DLT Fourier
domain, i.e., where only a subset of the spectrum is non-zero.

We call a lattice signal $\coord{s}$ $k$-Fourier-sparse if its Fourier support $\ftsu = \text{supp}(\hat{\coord{s}})$ 
satisfies
\begin{equation}
	\label{eq:DefFourierSparse}
	|\ftsu| = |\text{supp}(\hat{\coord{s}})| = |\{b \in\latt: \hat s_b
	\neq 0\}| = k.     
\end{equation}
As an example, the signal in Fig.~\ref{exsig} is 4-Fourier-sparse as shown in Fig.~\ref{exsigfour}. Following the general paradigm of classical sampling
theory~\cite{Vetterli:14}, we are looking for a linear
sampling operator $P_{\sisu}$ that reduces the signal $\coord{s}$ to
$|\sisu| = k = |\ftsu|$ samples such that there exists a linear
interpolation operator $I_\sisu$ that allows for perfect recovery of
$\coord{s}$ from these samples:
$\coord{s} = I_\sisu P_\sisu\coord{s}$. The triangular shape of the
Fourier transform makes this task easy.

\begin{figure*}\centering
	\subcaptionbox{Example signal $\coord{s}$\label{exsig}}[0.24\linewidth]
	{\includegraphics[width=0.6\linewidth]{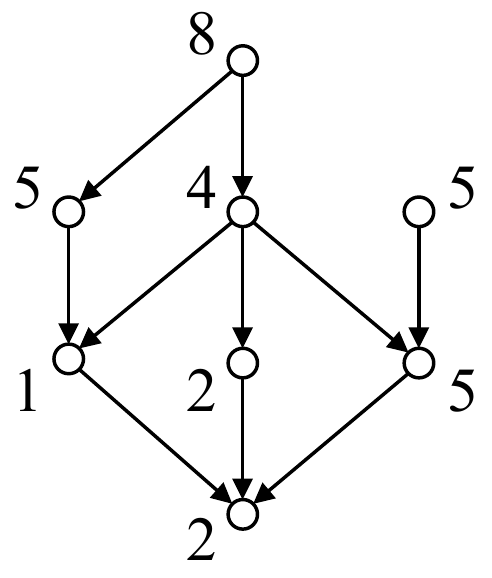}}
	\hfill
	\subcaptionbox{Its spectrum $\ft{\coord{s}}$\label{exsigfour}}[0.24\linewidth]
	{\includegraphics[width=0.6\linewidth]{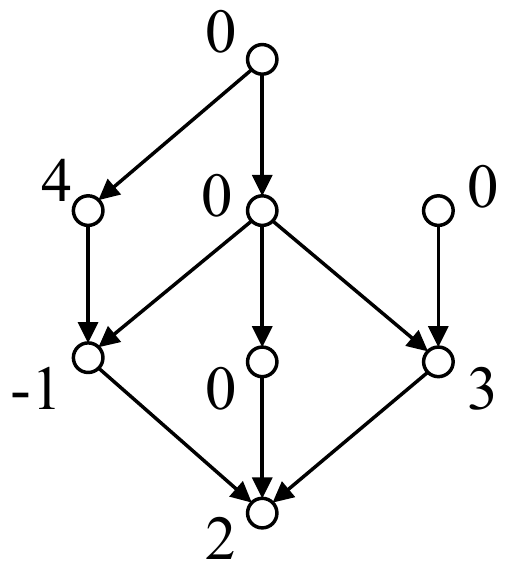}}
	\hfill
	\subcaptionbox{$P_B \coord{s} = \coord{s}_B$\label{ex:sampling}}[0.24\linewidth] 
	{
		\resizebox{40mm}{!}{
			$	
			\left[
			\begin{array}{rrrr}
				1 & 0 & 0 & 0 \\
				0 & 1 & 0 & 0 \\
				0 & 0 & 0 & 0 \\
				0 & 0 & 1 & 0 \\
				0 & 0 & 0 & 1 \\
				0 & 0 & 0 & 0 \\
				0 & 0 & 0 & 0 \\
				0 & 0 & 0 & 0 \\
			\end{array}
			\right]^{T}
			\cdot
			\left[
			\begin{array}{r}
				2 \\
				1 \\
				2 \\
				5 \\
				5 \\
				4 \\
				5 \\
				8 \\
			\end{array}
			\right]
			=
			\left[
			\begin{array}{r}
				2 \\
				1 \\
				5 \\
				5 \\
			\end{array}
			\right]
			$
	}}
	\hfill
	\subcaptionbox{$\coord{s} = I_B \coord{s}_B$ \label{ex:interpolation}}[0.24\linewidth]
	{\resizebox{40mm}{!}{
			$
			\left[
			\begin{array}{rrrr}
				1 & 0 & 0 & 0 \\
				0 & 1 & 0 & 0 \\
				1 & 0 & 0 & 0 \\
				0 & 0 & 1 & 0 \\
				0 & 0 & 0 & 1 \\
				-1 & 1 & 1 & 0 \\
				0 & 0 & 1 & 0 \\
				-1 & 0 & 1 & 1 \\                  
			\end{array}
			\right]
			\cdot
			\left[
			\begin{array}{r}
				2 \\
				1 \\
				5 \\
				5 \\
			\end{array}
			\right]
			=
			\left[
			\begin{array}{r}
				2 \\
				1 \\
				2 \\
				5 \\
				5 \\
				4 \\
				5 \\
				8 \\
			\end{array}
			\right]
			$
	}}
	\caption{The signal in (a) on $\latt$ in Fig.~\ref{l1} is
		4-Fourier-sparse with support
		$\ftsu = \{a, b, d, f\}$ as shown in (b). The sampling in (c) followed by the interpolation in (d) allows for perfect reconstruction.\label{fig:ExampleSamplingAndInterpolation}}
\end{figure*}

Let $\sisu = \{a_1,\dots,a_k\}\subseteq\latt$. We consider the
associated sampling operator
\begin{equation}
	\label{eq:LatticeSamplingOperator}%
	P_\sisu: \R^n\mapsto\R^k,\
	\coord{s} \mapsto \coord{s}_{\sisu} = (s_{a_1}, \dots, s_{a_k})^T.
\end{equation}

\begin{theorem}[Lattice sampling]
	\label{thm:sampling}
	Let $\coord{s}$ be a $k$-sparse lattice signal on $\latt$ with
	Fourier support $\ftsu$. Then $\coord{s}$ can be perfectly
	reconstructed from the samples
	$\coord{s}_\ftsu = P_\ftsu\coord{s}$.
	
	Namely, $\coord{s} = I_\ftsu \coord{s}_\ftsu$ with
	$I_\ftsu = \DLT^{-1}_{\latt,\ftsu}
	(\DLT^{-1}_{\ftsu,\ftsu})^{-1}$. The matrix
	$\DLT^{-1}_{\sisu, \ftsu}$ is the sub-matrix of $\DLT^{-1}$
	obtained by selecting the rows indexed by $\sisu$ and the columns
	indexed by $\ftsu$.
\end{theorem}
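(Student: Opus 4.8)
The plan is to exploit the triangular structure of the DLT, exactly as the authors hint. Write $\coord{s} = \DLT_\latt^{-1}\,\ft{\coord{s}}$. Since $\coord{s}$ is $k$-Fourier-sparse with support $\ftsu$, only the $k$ columns of $\DLT_\latt^{-1}$ indexed by $\ftsu$ contribute, so $\coord{s} = \DLT_{\latt,\ftsu}^{-1}\,\ft{\coord{s}}_\ftsu$, where $\ft{\coord{s}}_\ftsu\in\R^k$ collects the nonzero Fourier coefficients. Restricting this identity to the rows indexed by the sample set $\sisu = \ftsu$ gives $\coord{s}_\ftsu = \DLT_{\ftsu,\ftsu}^{-1}\,\ft{\coord{s}}_\ftsu$. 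If the square $k\times k$ matrix $\DLT_{\ftsu,\ftsu}^{-1}$ is invertible, then $\ft{\coord{s}}_\ftsu = (\DLT_{\ftsu,\ftsu}^{-1})^{-1}\coord{s}_\ftsu$, and substituting back yields $\coord{s} = \DLT_{\latt,\ftsu}^{-1}(\DLT_{\ftsu,\ftsu}^{-1})^{-1}\coord{s}_\ftsu = I_\ftsu\coord{s}_\ftsu$, which is the claimed reconstruction. The sampling operator $P_\ftsu$ in \eqref{eq:LatticeSamplingOperator} is exactly the row selection that produces $\coord{s}_\ftsu$, so this establishes $\coord{s} = I_\ftsu P_\ftsu\coord{s}$.

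The one nontrivial point — and the main obstacle — is showing that $\DLT_{\ftsu,\ftsu}^{-1}$ is nonsingular for \emph{every} choice of support $\ftsu$, not merely generically. Here I would use that $\DLT_\latt^{-1}$ has entries $(\DLT_\latt^{-1})_{x,y} = \chr{y\leq x}$ by \eqref{purefreq}, i.e.\ it is the zeta matrix of the poset, which is unit lower-triangular with respect to any topological sort of $\latt$. The key structural fact is that a principal-type submatrix of a unit triangular matrix obtained by selecting the \emph{same} row and column index set $\ftsu$ is again unit triangular with respect to the induced topological order on $\ftsu$: its diagonal entries are $\chr{b\leq b} = 1$ for $b\in\ftsu$, and if we order the elements of $\ftsu$ compatibly with the partial order of $\latt$, then for $b$ earlier than $b'$ we have $b'\not\leq b$ unless $b'=b$, so the matrix is lower-triangular with unit diagonal. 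Hence $\det\DLT_{\ftsu,\ftsu}^{-1} = 1 \neq 0$, and the inverse exists unconditionally. This also shows $\DLT_{\ftsu,\ftsu}$ (the analogous submatrix of $\DLT_\latt$ itself, which is the Möbius matrix) is the inverse of $\DLT_{\ftsu,\ftsu}^{-1}$ when $\ftsu$ is a \emph{down-set}, but in general they need not coincide, so the formula in the statement correctly writes $(\DLT_{\ftsu,\ftsu}^{-1})^{-1}$ rather than $\DLT_{\ftsu,\ftsu}$.

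Finally I would verify the bookkeeping that the dimensions match: $\ftsu$ has size $k$ by the definition of $k$-Fourier-sparsity in \eqref{eq:DefFourierSparse}, $P_\ftsu$ maps $\R^n\to\R^k$, $\DLT_{\ftsu,\ftsu}^{-1}$ is $k\times k$, $\DLT_{\latt,\ftsu}^{-1}$ is $n\times k$, so $I_\ftsu = \DLT_{\latt,\ftsu}^{-1}(\DLT_{\ftsu,\ftsu}^{-1})^{-1}$ is $n\times k$ as required, and $I_\ftsu P_\ftsu$ is the $n\times n$ projector onto the span of the $k$ selected columns of $\DLT_\latt^{-1}$, which contains $\coord{s}$. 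The example in Fig.~\ref{fig:ExampleSamplingAndInterpolation} with $\ftsu = \{a,b,d,f\}$ can be used as a sanity check on the formula. I do not expect any further subtlety; once the nonsingularity of $\DLT_{\ftsu,\ftsu}^{-1}$ is in hand, the proof is a two-line linear-algebra computation.
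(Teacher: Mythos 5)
Your proposal is correct and follows essentially the same route as the paper: express $\coord{s}$ through the $\ftsu$-indexed columns of $\DLT^{-1}_\latt$, apply $P_\ftsu$, and invert the square submatrix $\DLT^{-1}_{\ftsu,\ftsu}$, whose nonsingularity follows from its unit-triangular form under a topological sort. The only difference is that you spell out the triangularity-of-the-principal-submatrix argument (and the correct use of $(\DLT^{-1}_{\ftsu,\ftsu})^{-1}$ rather than $\DLT_{\ftsu,\ftsu}$) in more detail than the paper, which simply asserts it.
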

\begin{proof}
	As $\coord{s}$ has Fourier support $\ftsu$, we have
	\begin{equation*}
		\coord{s} = \DLT^{-1}_{\latt,\ftsu} \hat{\coord{s}}_\ftsu.
	\end{equation*}
	Applying the sampling operator $P_\ftsu$ to both sides yields
	\begin{equation*}
		\coord{s}_\ftsu = \DLT^{-1}_{\ftsu,\ftsu} \hat{\coord{s}}_\ftsu.
	\end{equation*}
	What remains to show is that $\DLT^{-1}_{\ftsu,\ftsu}$ is
	invertible. However, this is the case because of its triangular
	shape (when $\latt$ is topologically sorted) with 1s on the
	diagonal.
\end{proof} 

\mypar{Example} Fig.~\ref{fig:ExampleSamplingAndInterpolation} instantiates 
Theorem~\ref{thm:sampling} for the 4-Fourier-sparse signal in Fig.~\ref{exsig}. In this case, the sampling operator samples the signal at indices $\{a,b,d,f\}$ as shown in Fig.~\ref{ex:sampling}. The corresponding interpolation operator in Fig.~\ref{ex:interpolation} enables perfect reconstruction.

\section{Subset Lattices}\label{sec:subsetlatt}

The powerset $2^N$ of a finite set $N$ forms a lattice with meet $\cap$ and join $\cup$ and its cover graph is a hypercube (e.g., Fig.~\ref{l3}). Signals on the powerset lattice are called set functions and the Fourier basis and Fourier transform have a very regular structure and possess additional properties \cite{Pueschel:20}.

Any subset of the powerset lattice that is closed under $\cap$ is a meet-semilattice. In the following we show that the reverse is also true, i.e., every meet-semilattice can be viewed as a subset lattice with $\cap$ as meet. For example, the lattice in Fig.~\ref{l2} is isomorphic to the subset lattice $\{a,c,e,h\}$ in Fig.~\ref{l3}. This is relevant since it enables the encoding of lattice elements, viewed as subsets, as bit vectors with bitwise-and as the meet operation. Further, it shows that lattice signals can be viewed as sparse set functions.

Let $\latt$ be a meet-semilattice. If $\latt$ does not contain a unique maximum, we add one $\latt'= \latt\cup\{\maxel\}$ to obtain a lattice (Lemma~\ref{addmax}). In $\latt'$ we let $\gens$ be the set of join-irreducible elements. Note that $\maxel\not\in\gens$ since otherwise it would cover only one element which would then already be a unique maximum in $\latt$. Thus, $\gens\subseteq\latt$. With this we can express $\latt$ isomorphically as a subset lattice:

\begin{theorem}\label{embed}
The mapping
$$
\phi:\ \latt\rightarrow 2^\gens,\quad x\mapsto \{g\in\gens\mid g\leq x\}
$$
satisfies
$$
\phi(x\meet y) = \phi(x)\cap\phi(y).
$$
\end{theorem}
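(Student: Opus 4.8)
The plan is to prove the identity directly from the definition of $\phi$ and the universal (greatest‑lower‑bound) property of the meet; somewhat surprisingly, nothing about $\gens$ being the set of join‑irreducibles is needed for the stated equation itself — that hypothesis is only what makes $\phi$ injective, hence a genuine isomorphism onto its image, as the surrounding text asserts.

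First I would unwind both sides as subsets of $\gens$:
\[
\phi(x\meet y)=\{g\in\gens\mid g\leq x\meet y\},\qquad \phi(x)\cap\phi(y)=\{g\in\gens\mid g\leq x\text{ and }g\leq y\}.
\]
So it suffices to check, for each $g\in\gens$, the equivalence $g\leq x\meet y \Leftrightarrow (g\leq x\text{ and }g\leq y)$. The forward implication is transitivity, using $x\meet y\leq x$ and $x\meet y\leq y$; the backward implication is precisely the defining property of $x\meet y$ as the greatest lower bound of $\{x,y\}$: a common lower bound $g$ of $x$ and $y$ must satisfy $g\leq x\meet y$. This gives the set equality and finishes the theorem as stated.

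To connect with the word ``isomorphically'' in the lead‑in, I would additionally note that $\phi$ is injective, and this is the only place the choice $\gens=\{\text{join-irreducibles of }\latt'\}$ matters. The key fact is that in a finite lattice every element equals the join of the join‑irreducibles below it, i.e.\ $x=\bigjoin_{\latt'}\phi(x)$ for all $x\in\latt'$ (and since $\maxel\notin\gens$ this join stays within $\gens\subseteq\latt$ for $x\in\latt$). I would prove this by induction over $\latt'$ from the bottom up: the minimum is the empty join, a join‑irreducible $x$ lies in $\phi(x)$, and otherwise $x=a\join b$ with $a,b<x$, so by induction $a=\bigjoin\phi(a)$ and $b=\bigjoin\phi(b)$ with $\phi(a),\phi(b)\subseteq\phi(x)$, whence $x\leq\bigjoin\phi(x)\leq x$. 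Injectivity then follows: $\phi(x)=\phi(y)$ implies $x=\bigjoin\phi(x)=\bigjoin\phi(y)=y$, so $\phi$ is an order‑ and meet‑embedding of $\latt$ into $2^\gens$.

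There is essentially no obstacle here: the displayed identity is a formality about greatest lower bounds. The only thing to watch is not to over‑claim — the substantive ingredients (finiteness, existence of joins in $\latt'$, and $\maxel\notin\gens$) enter solely in the injectivity remark that upgrades $\phi$ to an embedding, not in the equation $\phi(x\meet y)=\phi(x)\cap\phi(y)$ itself.
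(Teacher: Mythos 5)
Your proof is correct and matches the paper's argument: both sides are unwound as subsets of $\gens$ and the identity reduces to the equivalence $g\leq x\meet y \Leftrightarrow (g\leq x$ and $g\leq y)$, which is exactly the greatest-lower-bound property of the meet used in the paper's one-line proof. Your additional injectivity argument (every element is the join of the join-irreducibles below it) goes beyond what the paper proves but correctly justifies the word ``isomorphically'' in the surrounding text.
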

\begin{proof}
$\phi(x\meet y) = \{g\in\gens\mid g\leq x\meet y\} = \{g\in\gens\mid g\leq x\text{ and }g\leq y\} = \phi(x)\cap\phi(y)$.
\end{proof}
The analogous construction for a join-semilattice yields an isomorphism that satisfies $\phi(x\join y) = \phi(x)\cap\phi(y)$. 

For the lattice in Fig.~\ref{l1}, we obtain the join-irreducibles $\gens = \{b,c,d,e,g\}$ and thus $\latt$ is isomorphic to a subset lattice of a powerset lattice of a set with five elements.

The construction in Theorem~\ref{embed} was also used in \cite{Bjorklund:15} to obtain fast lattice Fourier transforms.


\section{Application Example: Formal Concept Lattices in Social Data Analysis}
\label{sec:FormalConceptLattices}%

In this section we present one possible application domain for DLSP: the well-developed area of formal concept lattices (FCLs) used in social data analysis~\cite{Ganter.Wille:2012a}. These lattices can be seen as a representation of relations between objects and attributes. Relations can be viewed equivalently as bipartite graphs or hypergraphs. DLSP provides a notion of filtering and Fourier analysis for data on relations via FCLs.

We first introduce the notion of FCLs using a small example. Then we provide a formal definition of FCLs and consider a larger dataset on information about customers of a telecommunication company.

\subsection{A Small Motivating Example}\label{sec:smallfcl}

Table~\ref{table:SmallFCADataTelco} shows a sample of seven customers (users) of a telecommunication
service with eight binary properties and the additional churn
  variable, which indicates whether a user did cancel her or his
  contract\footnote{The data is part of the IBM sample datasets
  \url{https://www.ibm.com/support/knowledgecenter/SSEP7J_11.1.0/com.ibm.swg.ba.cognos.ig_smples.doc/c_telco_dm_sam.html}.
  The dataset is available under
  \url{https://www.kaggle.com/blastchar/telco-customer-churn}}. For
example, user $U_3$ has internet service, prefers paperless
billing and canceled the contract.

\newcommand{\TableOne}{$\times$}
\newcommand{\TableZero}{}
\newcommand{\MyGray}{gray!30}
\begin{table}
    \ra{1}
    \centering
    \small
    \begingroup\fboxsep=3pt
    \begin{tabular}{@{}l*{3}{c@{\,}}*{2}{c@{}}*{2}{c@{\,}}@{}}\toprule
      & \multicolumn{7}{c}{User}  \\
      \cmidrule{2-8}
      Property & $U_1$ & $U_2$ &  $U_3$ &  $U_4$ &  $U_5$ & $U_6$ & $U_7$ \\
      \midrule $P_1$: Gender  & \TableZero & \TableOne & \TableZero &{\colorbox{\MyGray}{\TableOne}}&\colorbox{\MyGray}{\TableOne}& \TableZero & \TableZero\\
      $P_2$: Partner & \TableOne & \TableZero & \TableZero &\colorbox{\MyGray}{\TableOne}&\colorbox{\MyGray}{\TableOne}& \TableOne & \TableOne \\
      $P_3$: Dependents & \TableZero & \TableZero & \TableZero &\colorbox{\MyGray}{\TableOne}&\colorbox{\MyGray}{\TableOne}& \TableOne & \TableOne\\
      $P_4$: InternetService &\TableOne & \TableOne & \TableOne & \TableZero & \TableOne & \TableZero & \TableOne\\
      $P_5$: DeviceProtection & \TableZero & \TableOne & \TableZero & \TableZero & \TableOne & \TableZero & \TableZero \\
      $P_6$: TechSupport & \TableZero & \TableZero & \TableZero & \TableZero & \TableZero & \TableZero & \TableOne\\
      $P_7$: StreamingMovies & \TableZero & \TableOne &\TableZero & \TableZero & \TableOne & \TableZero & \TableZero\\
      $P_8$: PaperlessBilling & \TableOne & \TableOne & \TableOne & \TableZero & \TableOne & \TableZero & \TableZero \\
      \midrule Churn & 1 &  0 & 1 & 0 & 1 & 0 & 1 \\
	 \bottomrule
    \end{tabular}
    \endgroup
\caption{A sample of a telecommunications dataset. One formal concept is highlighted.\label{table:SmallFCADataTelco}} 
\end{table}

Every user satisfies a number of properties and every property is satisfied by a certain number of users. This relationship can be extended to sets of users and sets of properties. For example properties $\{P_1,P_2\}$ are satisfied by the users $\{U_4,U_5\}$. Conversely, the properties jointly satisfied by $\{U_4,U_5\}$ also include $P_3$: $\{P_1,P_2,P_3\}$, so this relationship between sets of users and sets of properties is not always one-to-one.
For $\{U_4,U_5\}$ and $\{P_1,P_2,P_3\}$ it is and defines a so-called formal concept.

More specifically, a formal concept is a pair of a set of objects (here: users) and a set of attributes (here: properties) such that the objects are uniquely specified by the attributes and vice versa. In Table~\ref{table:SmallFCADataTelco} each formal concept corresponds to a maximal rectangle of crosses. The example concept above is highlighted. Note that a maximal rectangle does not need to be contiguous (e.g., $(\{U_2,U_5\}, \{P_4,P_5,P_7,P_8\})$).

The formal concepts form the FCL, which for Table~\ref{table:SmallFCADataTelco} is shown in Fig.~\ref{fig:SmallTelcoFCALattice}, using a shorthand notation for sets. The formal concepts are partially ordered by inclusion of object sets, or, equivalently, by inverse inclusion of attribute sets (more objects mean that fewer attributes are satisfied). The meet is obtained by intersecting attribute sets and the join by intersecting object sets.

For example, the minimal element $(U_{12\cdots 7},\emptyset)$ in Fig.~\ref{fig:SmallTelcoFCALattice} shows that no property is satisfied by all users and vice-versa. The aforementioned $(U_{45}, P_{123})$ is a lattice element. The meet of $(U_7,P_{2346})$ and $(U_{45},P_{123})$ is obtained by intersecting the property sets and yields $(U_{4567},P_{23})$.

\begin{figure}
    \centering
    \includegraphics[width=0.7\linewidth]{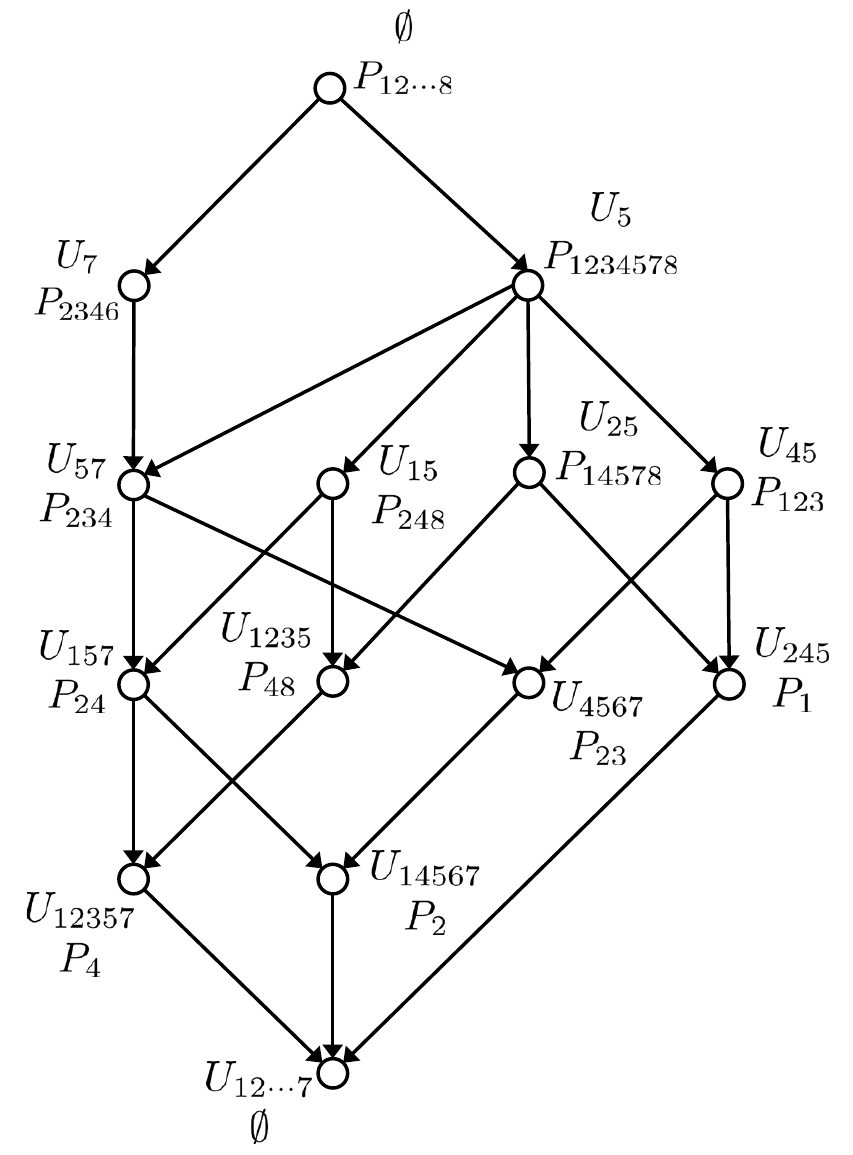}
    \caption{The formal concept lattice, with 14 elements, of the dataset in
      Table~\ref{table:SmallFCADataTelco}. We use the shorthand
      notations $P_{k_1,\ldots,k_j} = \{P_{k_1}, \ldots, P_{k_j}\}$
      and $U_{k_1,\ldots,k_j} = \{U_{k_1},\ldots,U_{k_j}\}$.}
    \label{fig:SmallTelcoFCALattice}
\end{figure}

Finally, we note that the original Table~\ref{table:SmallFCADataTelco} can be reconstructed from the FCL, since every {\TableOne} is contained in at least one formal concept.

Assume that we want to analyze the mapping from user
properties to the churn, i.e., whether users have canceled their
contract. Users with the same properties are indistinguishable in this mapping and thus it makes sense to collect them into sets as done in the FCL.

We define as signal on the FCL the average churn of all users within a
formal concept. In Fig.~\ref{fig:SmallTelcoFCASignal} we show the
obtained signal together with its meet and join Fourier spectrum.

\begin{figure*}
	\centering
	\subcaptionbox{Lattice (properties only)\label{subfig:SmalltelcoSignal}}[0.24\linewidth]{
		\includegraphics[width=0.93\linewidth]{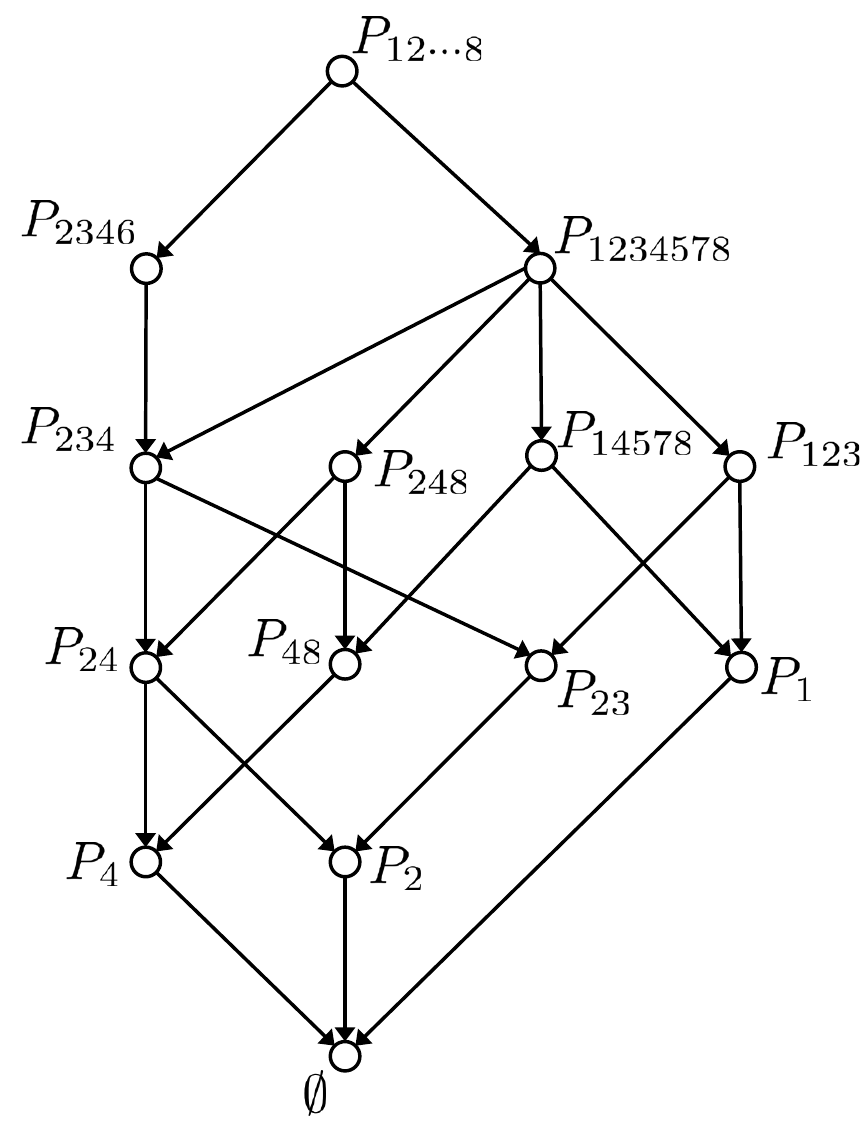}
	}
	\subcaptionbox{Lattice signal\label{subfig:SmalltelcoSignal}}[0.24\linewidth]{
		\includegraphics[width=0.8\linewidth]{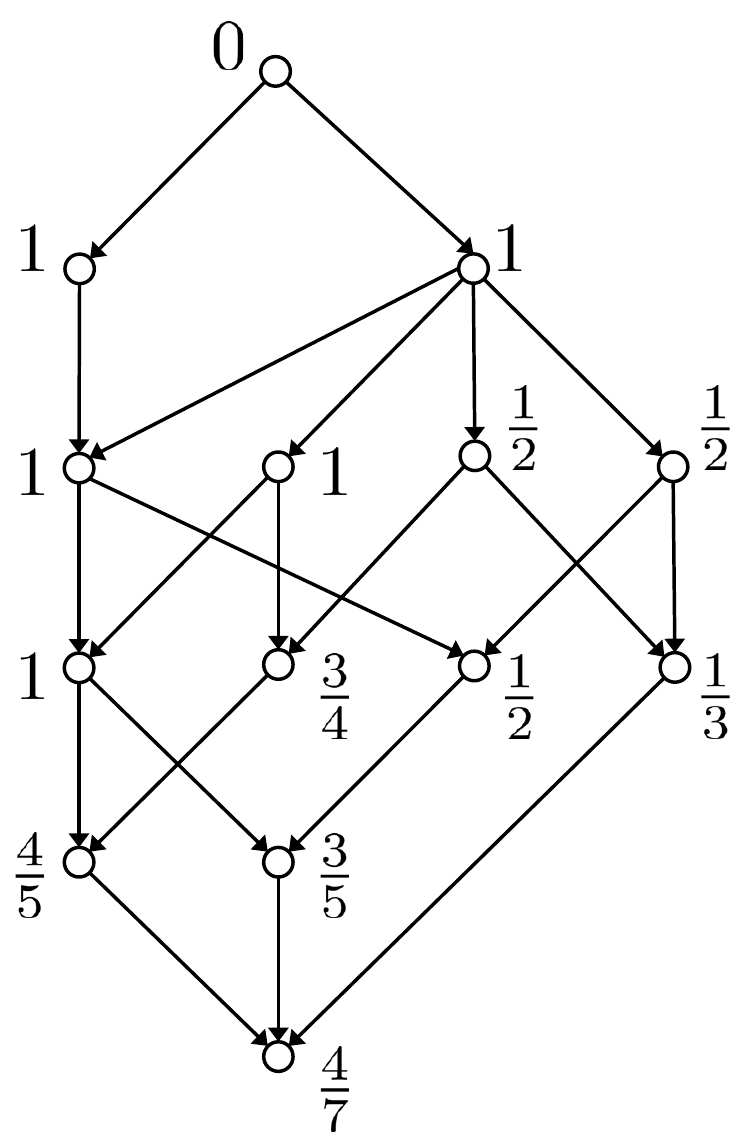}
	}
	\subcaptionbox{Meet spectrum\label{subfig:SmalltelcoSignalMeet}}[0.24\linewidth]{
		\includegraphics[width=0.9\linewidth]{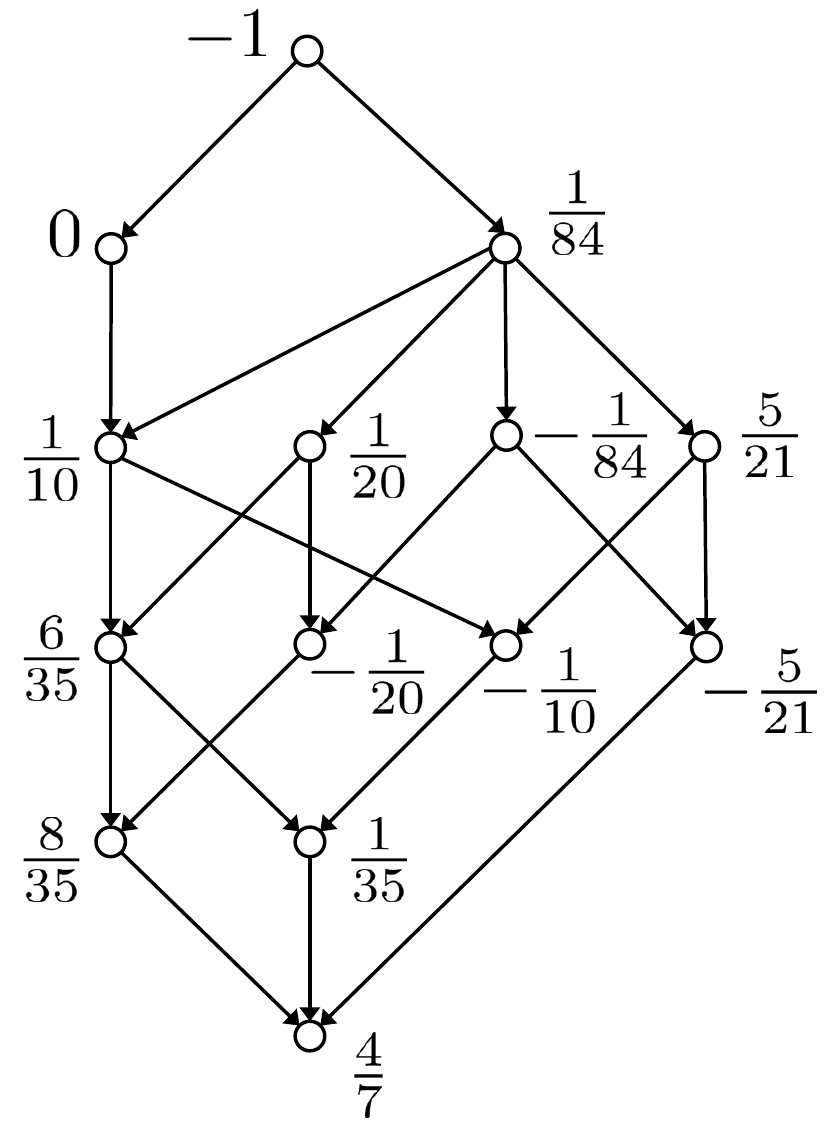}
	}
	\subcaptionbox{Join spectrum\label{subfig:SmalltelcoSignalJoin}}[0.24\linewidth]{
		\includegraphics[width=0.9\linewidth]{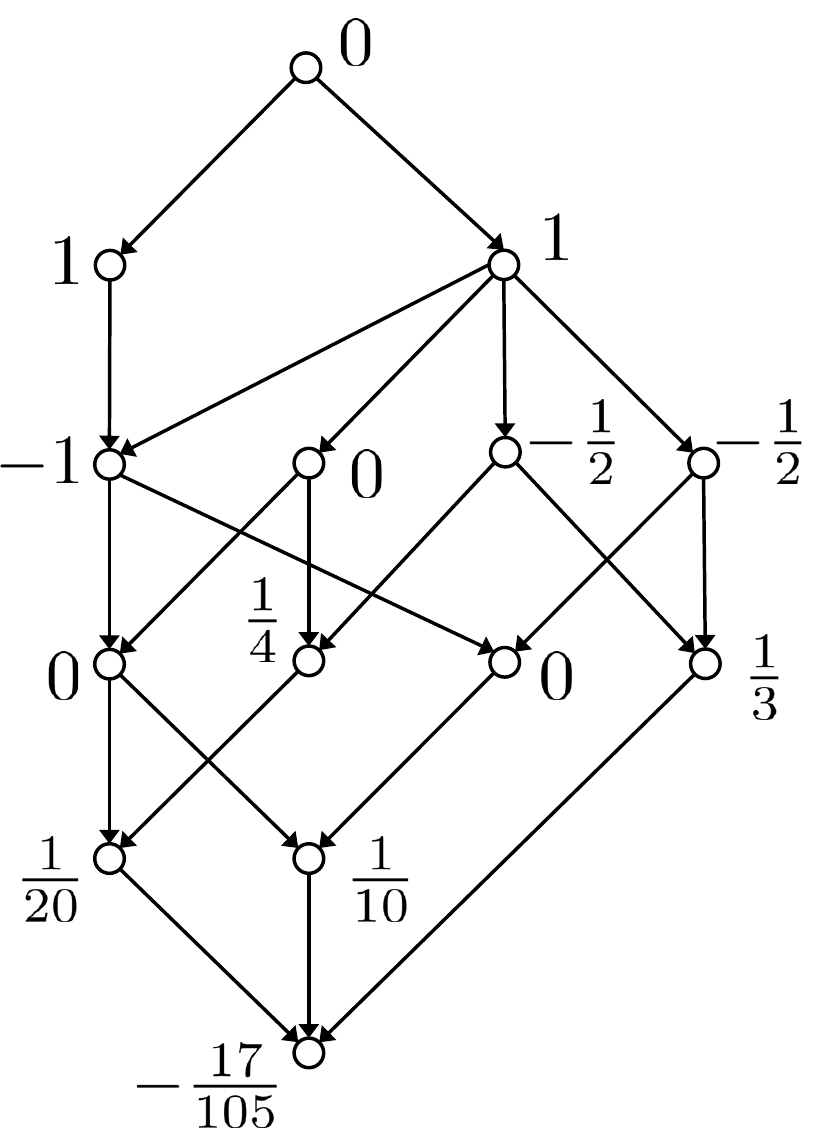}
	}
	\hfill
	\caption{The signal and its meet and join spectrum on the formal
		concept lattice of the data in Table~\ref{table:SmallFCADataTelco}.}
	\label{fig:SmallTelcoFCASignal}
\end{figure*}

For example, Fig.~\ref{fig:SmallTelcoFCASignal}(b) shows that the value at the minimum (no properties, satisfied by all users) is $4/7$, the average over all users. The value for $P_{48}$ (properties 4 and 8, satisfied by users 1,2,3,5) is $3/4$, since 3 of these users have a churn of 1.

Using the intuition provided in Section~\ref{sec:intuition}, the meet spectrum, in the partial order
sense, captures the changes to the churn when adding properties to
concepts (from bottom to top). If a node has only one direct predecessor, it is simply the difference. E.g., the meet spectral value at $P_4$ is $8/35 = 4/5 - 4/7$, the difference of the signal values at $P_4$ and $\emptyset$. If a node has more than one predecessors (e.g., $P_{248}$) it is more complicated as shown by the DLT formula \eqref{dlt}.

Analogously, the join spectrum captures the changes to the churn when removing properties from
concepts (from top to bottom). Again, if a node has only one predecessor (here $y$ is a direct predecessor of $x$ if $y$ covers $x$) it is the difference, otherwise a more involved computation.

The signal is Fourier-sparse in the join spectrum. In Section~\ref{sec:largefcl} we will show for an equivalent, but much larger dataset how both spectra are sparse and reveal structure.

\subsection{Formal Concept Lattice} 

We now define formal concept lattices formally. Let ${\cal O}$ be a finite set of objects and ${\cal A}$ a
finite set of attributes. Let ${\cal R} \subseteq {\cal O} \times {\cal A}$ be a binary
relation with $(o, a) \in {\cal R}$ iff object $o$ has attribute $a$. For
$O \subseteq {\cal O}$ and $A \subseteq {\cal A}$, let
\begin{equation*}
    \attr(O) = \{a \in {\cal A}\mid (o,a) \in {\cal R} \text{ for all } o \in O\}
\end{equation*}
be the set of joint attributes of the objects in $O$, and let
\begin{equation*}
    \obj(A) = \{o \in {\cal O} \; | \;  (o,a) \in I \text{ for all } a \in A\}
\end{equation*}
be the set of objects that satisfy all attributes in $A$. 

A pair $(O, A) \in 2^{\cal O} \times 2^{\cal A}$ is called a {\em formal concept} if $\attr(O) = A$ and $\obj(A) = O$. In a relation table like Table~\ref{table:SmallFCADataTelco} these correspond to the maximal (not necessarily contiguous) rectangles of crosses. 

The formal concepts induced by the relation ${\cal R}$ form a lattice with partial order $(O_1, A_1) \leq (O_2, A_2)$ iff $A_1 \subseteq A_2$ (or equivalently iff $O_2 \subseteq O_1$). The meet operation is given by
\begin{multline}\label{eq:MeetFormalConcepts}
    (O_1, A_1) \meet (O_2, A_2) = \\(\obj(\attr(O_1 \cup O_2)), A_1 \cap A_2),    
\end{multline}
i.e., by intersecting attributes and collecting the associated objects. In the relation table it corresponds to the maximal rectangle which contains the attributes associated with the two initial rectangles.

Analogously, the join is given by
\begin{multline}
    \label{eq:JoinFormalConcepts}
    (O_1, A_1) \join (O_2, A_2) = \\(O_1 \cap O_2, \attr(\obj(A_1 \cup A_2))).
\end{multline}
In the relation table the join corresponds to the maximal rectangle which contains the properties associated with the two initial rectangles.

We note that, dually, the FCL can be defined to be upside down, i.e., with inverted order and flipped meet/join definitions. We also note that every lattice is isomorphic to a suitable FCL~\cite[Thm.~1]{Gantner.Wille:1997a}.


\subsection{Large Scale Example}\label{sec:largefcl}

We now expand the small example in Section~\ref{sec:smallfcl} to the complete telecommunication dataset with 7043 customers and 11 binary properties. Using the Python library \emph{Concepts}\footnote{Available online at \url{https://github.com/xflr6/concepts}.} we calculate the associated FCL, which has 813 elements. As lattice signal we again use the average of the churn variable for a user set in a concept.

The signal and its meet and join Fourier transforms are shown in
Fig.~\ref{fig:telcoMiddle}. The colors for the signal values are chosen to display the deviation to the mean churn over all customers, which is $1869/7043\approx 0.265$ and shown as white. In the spectra, white means zero. Note that the lattice for the join spectrum is drawn upside down such that low frequencies are at the bottom in both spectra (see last two rows in Table~\ref{summary}).

\begin{figure*}
    \centering
    \hfill
    \subcaptionbox{Lattice signal\label{subfig:telcoSignal}}[0.3\linewidth]{
      \includegraphics[width=\linewidth]{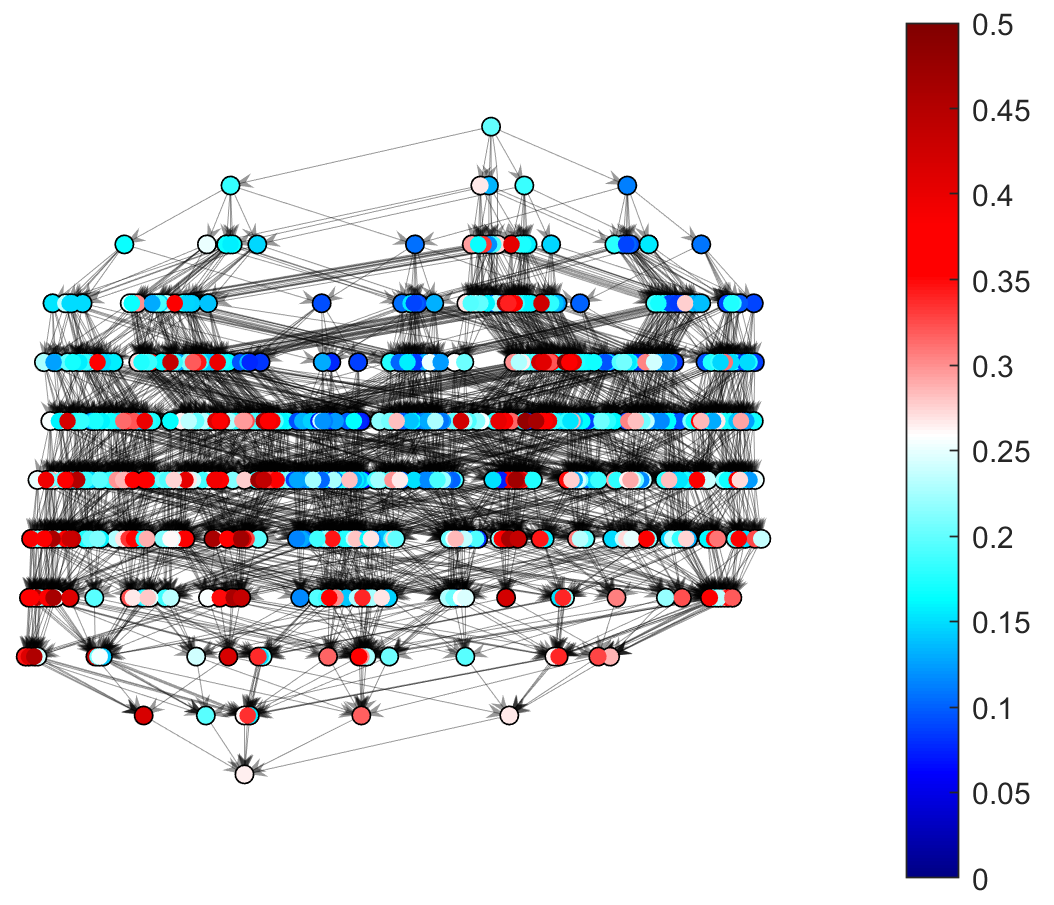}
    }
    \hfill
    \subcaptionbox{Meet spectrum\label{subfig:telcoMeetFourier}}[0.3\linewidth]{
      \includegraphics[width=\linewidth]{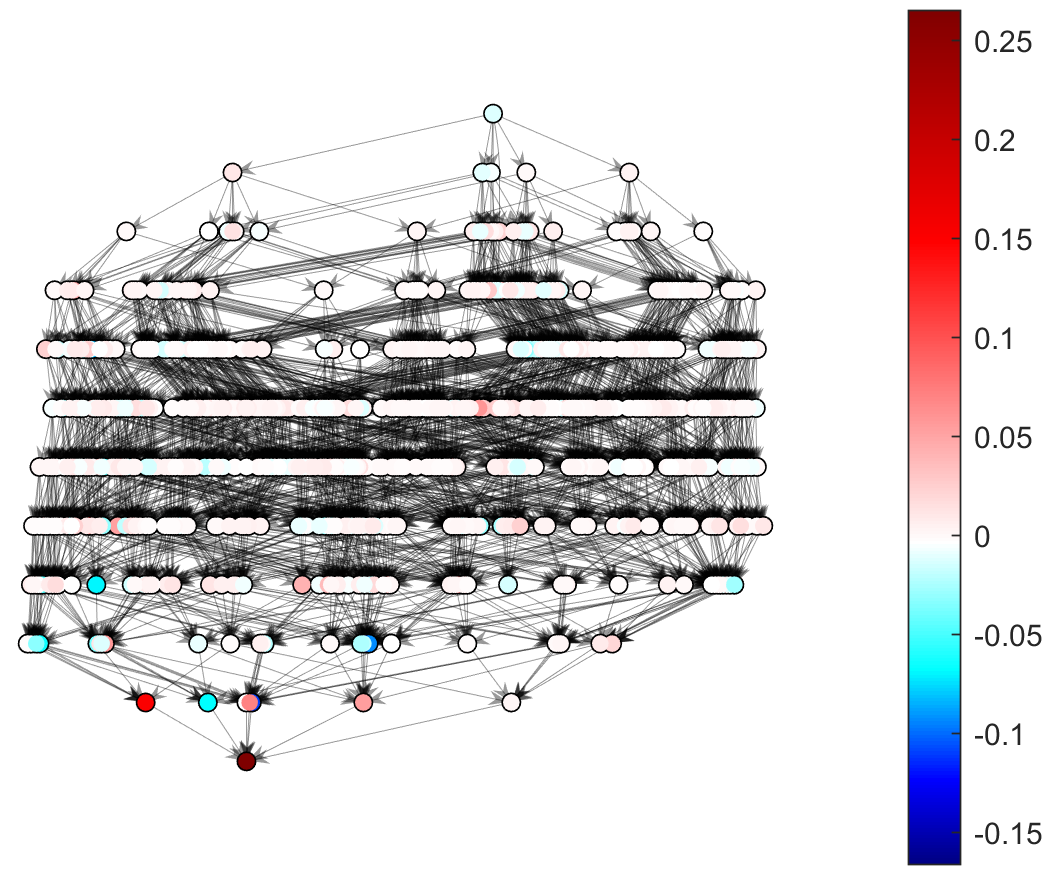}
    }
    \hfill
    \subcaptionbox{Join spectrum\label{subfig:telcoJoinFourier}}[0.3\linewidth]{
      \includegraphics[width=\linewidth]{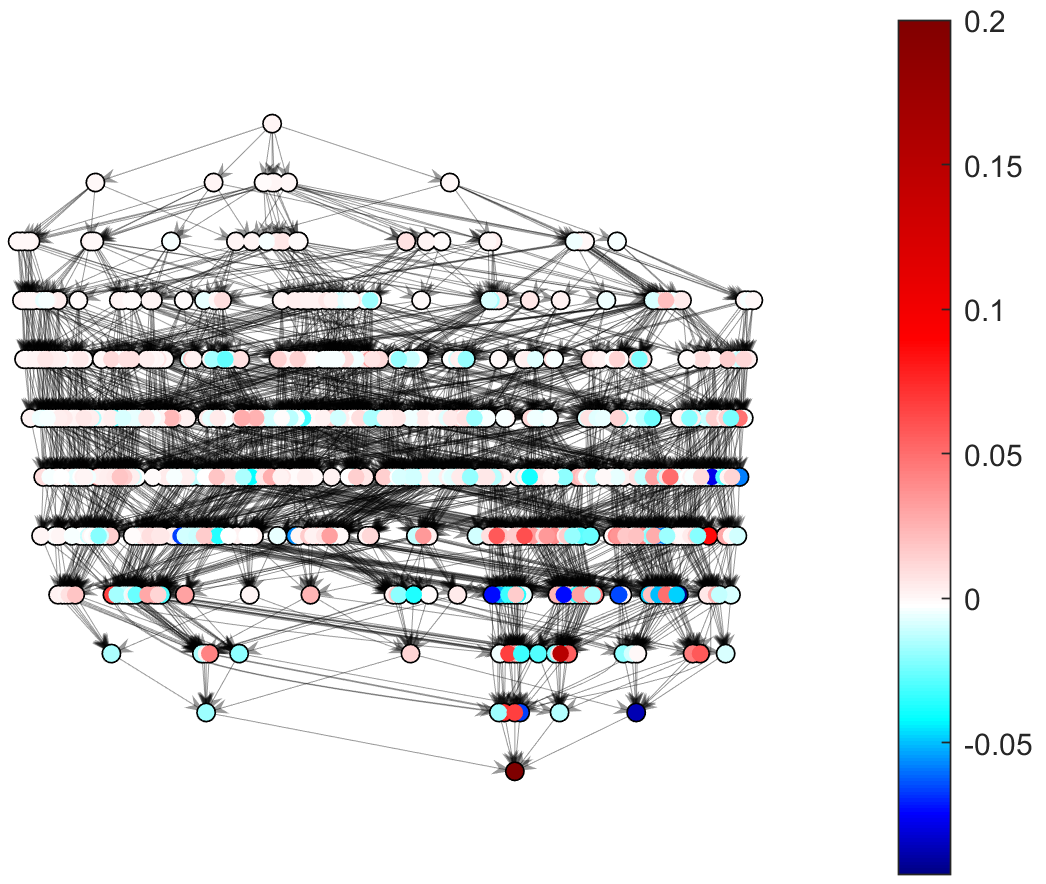}
    }
    \hfill
    \caption{The signal and its meet and join spectrum on the formal concept lattice of the data on the
      customers of a telecommunication company. Note that for the join spectrum, the lattice is inverted to have low frequencies on the bottom (see Table~\ref{summary}).}
    \label{fig:telcoMiddle}
\end{figure*}

We observe that the meet and join spectrum are fundamentally different but in both cases the signal is in tendency low frequency, which is more pronounced for the meet DLT. Further, the spectra reveal structure. They are Fourier-sparse and high values show transitions between property sets that are relevant for the churn variable. Building on DLSP one can imagine porting now basic SP methods for compression, sampling, or completion to social data analysis this way and also work on developing a deeper understanding on the meaning of spectrum beyond Section~\ref{sec:intuition}. Doing so is outside the scope of this paper, as we also wish to present another application domain in Section~\ref{sec:MultisetAuctions}.


\subsection{Relation to Bipartite Graphs and Hypergraphs} 

The relation ${\cal R}$ defines an undirected bipartite graph with nodes ${\cal O}\cup{\cal A}$ and an edge between two nodes if $(o,a)\in{\cal R}$. Conversely, every bipartite graphs defines a relation this way. ${\cal R}$ also defines a hypergraph\footnote{A hypergraph is a generalization of a graph that allows edges with more than two nodes.} with nodes ${\cal O}$ and hyperedges $\attr(a), a\in{\cal A}$ or nodes ${\cal A}$ and hyperedges $\obj(o), o\in{\cal O}$. Conversely, every hypergraph defines a relation this way. Thus, DLSP via FCLs, offers a form of Fourier analysis for these index domains, different from graph SP in \cite{Sandryhaila:13} or hypergraph SP in \cite{Zhang:20}.

\section{Application Example: Multiset Lattices in Spectrum Auctions}
\label{sec:MultisetAuctions}%

As a second prototypical application example we consider spectrum
auctions. We first show that bidders' preferences, called value
functions, in such auctions can be modeled as signals on multiset
lattices. Then we apply the sampling theorem to the problem of preference elicitation by reconstructing these value functions from few samples. Finally, we port Wiener
filtering to DLSP and use it to remove white noise from value functions.

\subsection{Multiset Lattices}
\label{subsec:MultisetLattice}%

We first introduce multiset lattices and then explain how signals on such lattices naturally appear as value functions in spectrum auctions.

\mypar{Lattice} A multiset is a generalization of a set that allows an element to appear more than once. Assuming a ground set $\{x_1,\dots,x_f\}$ of $f$ available elements, each (finite) multiset can be represented by a vector $m = (m_1,\dots,m_f)\in \mathbb{N}_0^f$, where each $m_i$ specifies how often element $i$ occurs in the set. For example, if $f=2$, $(2,1)$ represents $\{x_1,x_1,x_2\}$. A multiset $m$ is a set if all $m_i\leq 1$, i.e., if $m$ is a bit vector, a common representation for sets. 

A multiset $a$ is a submultiset of $m$ if $a \leq m$, defined componentwise. The set of submultisets $\mathcal{L} = \{a \in \mathbb{N}_0^f: a \leq m\}$ of a
given multiset $m$ is a lattice with meet $a \meet b = \min(a, b)$ and join
$a \join b = \max(a, b)$, for $a, b \in \mathcal{L}$, where the
minimum and maximum are taken componentwise. The multiset lattice
$\mathcal{L}$ has $|\mathcal{L}| = \prod_{1 \leq i \leq f} (m_i + 1)$
elements. A special case is the powerset lattice, for which $m_i = 1$
for all $i$ and thus its size is $2^f$. Note that if elements of a set are exchangeable, the multiset lattice has fewer elements than the powerset lattice.
Fig.~\ref{fig:PowersetVsMultiset} shows an example.

\begin{figure}\centering
    \hfill
    \subcaptionbox{Powerset lattice}[0.48\linewidth]
    {\includegraphics[width=\linewidth]{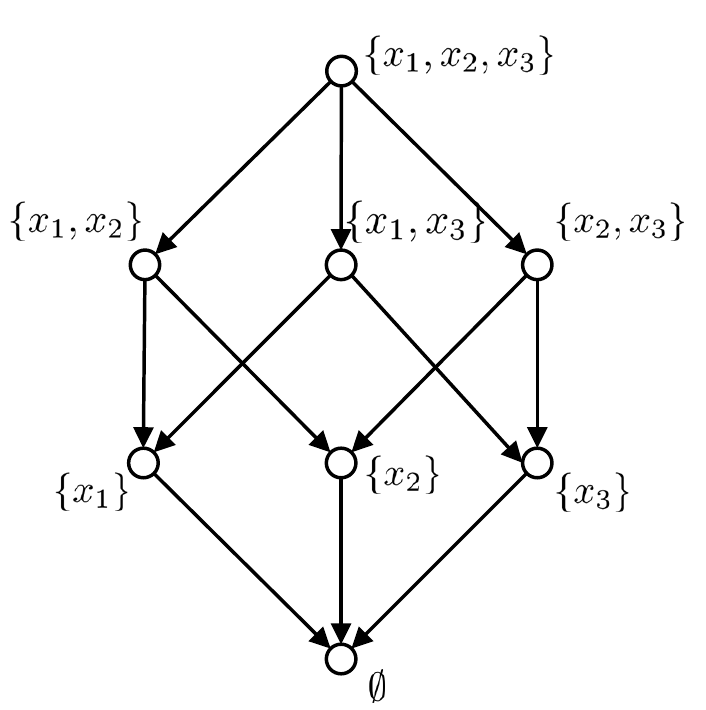}}
    \hfill
    \subcaptionbox{Multiset lattice}[0.48\linewidth]
    {\includegraphics[width=\linewidth]{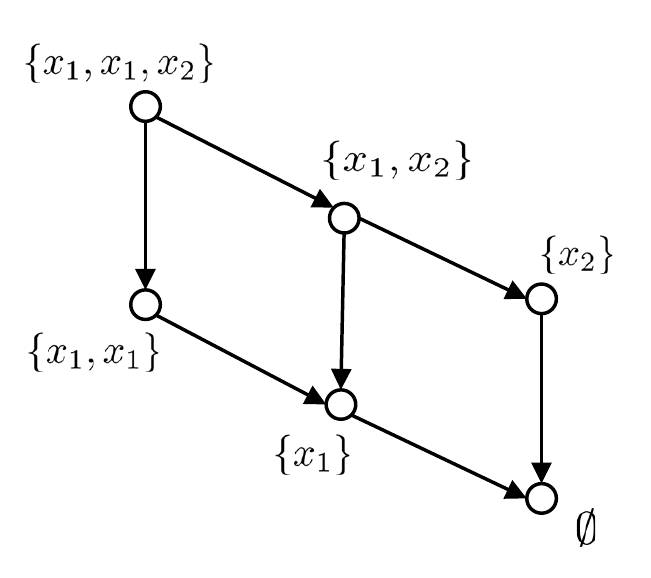}}
    \hfill
    \caption{If elements are exchangeable, e.g., if $x_1$ and $x_3$
      are the same thing, the powerset lattice~(a) requires more
      elements than the multiset lattice~(b) to model the situation
      accurately.\label{fig:PowersetVsMultiset}}
\end{figure}

\mypar{Spectrum auctions} Spectrum auctions~\cite{Cramton:13} are
combinatorial auctions~\cite{Ausubel:06} in which licenses for bands
of the electromagnetic spectrum are sold to bidders, e.g.,
telecommunication companies. For some bands there are multiple equal
licenses, which makes the set of available licenses a multiset $m$. 

\mypar{Lattice signal} Each bidder in such an auction is modeled as value function that assigns to each submultiset $a$ of available licenses its nonnegative value $s_a$ for the bidder. The value function is thus a multiset signal $\coord{s} = (s_a)_{a \in \latt}$, where $\latt$ is the lattice of submultisets of $m$ as explained before.

To find an allocation of the licenses to the bidders, an auctioneer first does a so-called preference elicitation. She asks each bidder for their values $s_a$ for a small number of $a\in\latt$, since the complete $\coord{s}$ is too large or too expensive to obtain. In other words, the auctioneer samples $\coord{s}$ for each bidder and an important problem is how to do this preference elicitation well \cite{Blum:04}.

As usual in the research on spectrum auctions, we consider simulated
bidders. In particular, here we use the single region value model (SRVM)
from the spectrum auctions test suite (SATS)~\cite{Weiss:17}. SATS
allows us to create multiple auction instances, and each instance comes with
its own set of bidders. There are four different bidder types: small, high-frequency, secondary, and primary bidders. The bidders' value functions depend on its type, e.g., small bidders prefer submultisets with few licenses, high-frequency bidders prefer licenses for high-frequency bands, etc. In SRVM the multiset of all available licenses is $m = (6, 14, 9)$. Modeling SRVM bidders' value functions as multiset functions instead of the commonly used set
functions~\cite{Weissteiner:20, Weissteiner:20a} thus reduces the
dimensionality of value functions from $2^{6+14+9} = 2^{29}$ to
$7 \cdot 15 \cdot 10 = 1050$.

In Fig.~\ref{fig:Bidder3SparseLattice} we show one example of a value function for one secondary bidder in SRVM and its (meet) Fourier transform. Relatively few of the frequencies contribute to the overall signal, i.e., the signal is Fourier-sparse as defined in Section~\ref{sec:sampling}. 

\begin{figure}
    \centering
    \hfill
    \includegraphics[width=0.49\linewidth]{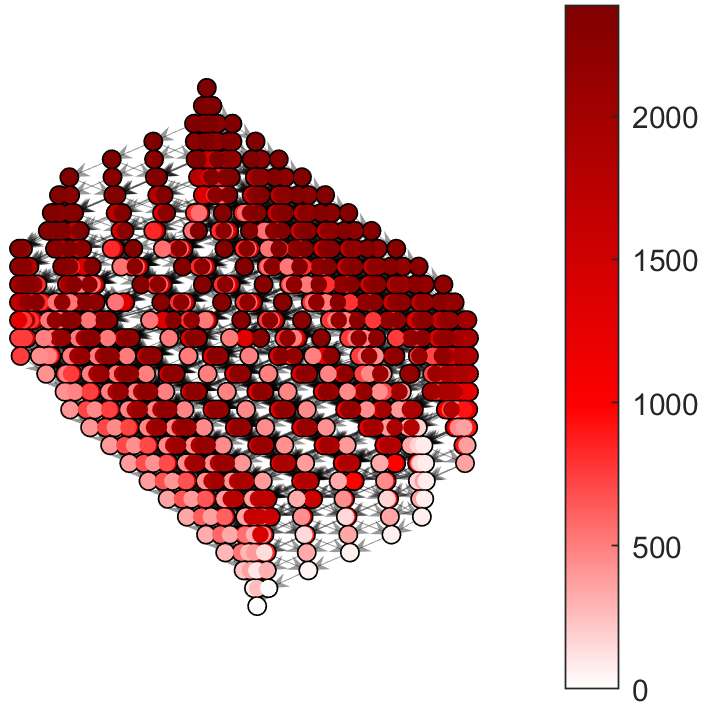}
    \hfill
    \includegraphics[width=0.49\linewidth]{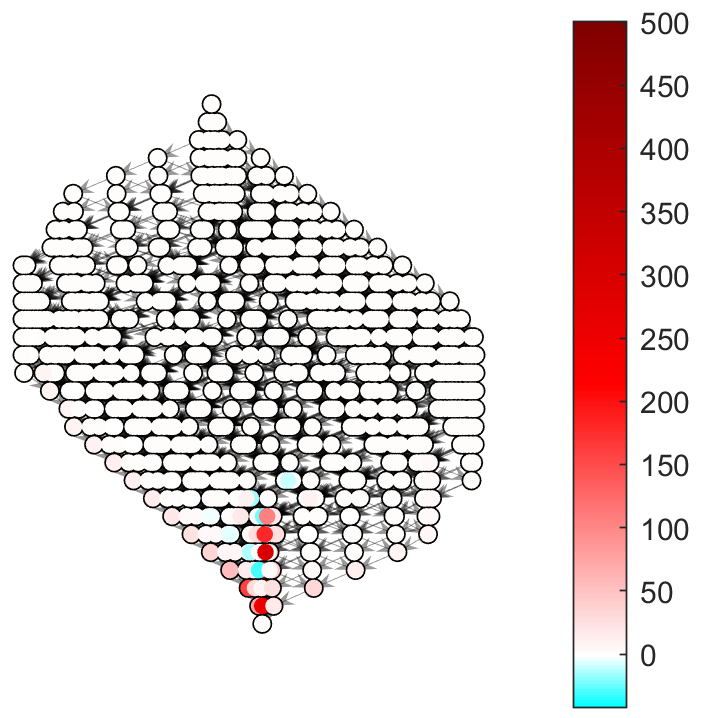}
    \hfill
    \caption{A secondary bidder as signal on a multiset lattice and its
      (meet) Fourier transform.}
    \label{fig:Bidder3SparseLattice}
\end{figure}

\subsection{Sampling in Combinatorial Auctions}
\label{subsec:SamplingAuctionsMultiset}%

In this section we apply our sampling Theorem~\ref{thm:sampling} to preference elicitation in combinatorial auctions.

\mypar{Experiment} Recently, several machine learning based preference
elicitation schemes~\cite{Blum:04}, where the bidders' preferences are modeled by,
e.g., neural networks~\cite{Weissteiner:20}, have been proposed. Here we sketch a
prototypical preference elicitation scheme based on DLSP Fourier sparsity.

Theorem~\ref{thm:sampling} can find a Fourier-sparse approximation
of a value function if the Fourier support of the most important
frequencies is known, which, in general is not the case. To overcome this issue, we compute the Fourier support from one auction instance of bidders and then use it for new auction instances and bidders, querying them according to Theorem~\ref{thm:sampling}.

It turns out that in the SRVM model all bidders were Fourier-sparse and their support did not change between different auction instances. Thus, we achieved perfect reconstruction from very few queries in all cases. We show our results in Table~\ref{tab:srvm}.

\begin{table}
    \ra{1.2}
    \centering
    \begin{tabular}{@{}lrrrr@{}}\toprule
      &  small & high-freq. & secondary & primary\\ \midrule
      Join-semilattice & 20 & 48 & 60 & 60 \\
      Meet-semilattice & 36 & 90 & 111 & 111 \\\bottomrule
    \end{tabular}
    \caption{Number of non-zero Fourier coefficients (= number of
      queries) required for the perfect reconstruction for different
      bidder types in SRVM.\label{tab:srvm}}
\end{table}

\subsection{Wiener Filter with Energy-Preserving Shift}
\label{sec:WienerFiltering}%

In this section we port Wiener filtering to DLSP using SRVM bidders as a case study. Our approach follows closely the one taken in graph SP by~\cite{Gavili.Zahng:2017a}. Namely, we first define an energy-preserving shift on which the Wiener filter is built. Part of this subsection is based on~\cite{Seifert.Wendler.Pueschel:2021a}.

\mypar{Energy-preserving shift} The lattice shifts, just as the graph shift, are not energy-preserving. Thus, similar to~\cite{Gavili.Zahng:2017a}
we first define an energy-preserving lattice shift as
\begin{equation}
    \label{eq:LatticeEnergyPreservingShift}
    T_e = \DLT^{-1} \cdot \Lambda_e \cdot \DLT
\end{equation}
with $\Lambda_e = \diag(\exp(-2 \pi \I k/\abs{\latt}) \; | \; k = 0,\dots,
\abs{\latt}-1)$. The frequency response of $T_e$ is the diagonal of $\Lambda_e$ and hence $T_e$ is a filter. The filter coefficients can be calculated
using~\eqref{invfreqresp}. Note that the complex frequency response also makes the filter coefficients in the lattice domain complex. Thus, we allow in this section the complex numbers as base field, to which the DLSP framework extends unmodified.

We summarize the properties of the energy-preserving shift.

\begin{theorem}[Properties of $T_e$]\label{theorem:EnergyPreservingShiftGeneratesLatticeAlgebra}%
The energy-preserving shift $T_e$ in \eqref{eq:LatticeEnergyPreservingShift}:
\begin{enumerate}[label=(\roman*)]
    \item \label{prop:EnergyPreserving}
    $\norm{\DLT \coord{s}}_2 = \norm{\DLT(T_e \coord{s})}_2$, i.e., it
    preserves energy, 
    \item \label{prop:GeneratesAllFilters} every lattice filter is a
    polynomial in the energy-preserving shift.
\end{enumerate}
\end{theorem}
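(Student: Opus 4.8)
The plan is to use that $T_e$ is, by construction, simultaneously diagonalized by the $\DLT$ together with all lattice filters, and that its eigenvalues are $\abs{\latt}$ pairwise distinct roots of unity.

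The energy-preserving property (i) is a one-line calculation: $\DLT(T_e\coord{s}) = \DLT\DLT^{-1}\Lambda_e\DLT\coord{s} = \Lambda_e\ft{\coord{s}}$, and since $\Lambda_e$ is diagonal with entries $\E^{-2\pi\I k/\abs{\latt}}$, all of modulus $1$, it is unitary, so $\norm{\Lambda_e\ft{\coord{s}}}_2 = \norm{\ft{\coord{s}}}_2 = \norm{\DLT\coord{s}}_2$.

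For property (ii) I would first recall from \eqref{freqresp}--\eqref{invfreqresp} that the map sending a filter $\sum_{a\in\latt}h_aT_a$ to its frequency response $(\fr{h}_y)_{y\in\latt}$ is a bijection: conjugating a filter by $\DLT$ yields the diagonal matrix $\diag(\fr{h}_y)_{y\in\latt}$, and invertibility of \eqref{invfreqresp} means that, conversely, every diagonal matrix in the $\DLT$ eigenbasis arises from a (unique) lattice filter. Hence it suffices to show that every such diagonal matrix is a polynomial in $\Lambda_e$. Fixing the topological-sort bijection $y\leftrightarrow k(y)\in\{0,\dots,\abs{\latt}-1\}$, a polynomial $p$ sends $\Lambda_e$ to the diagonal matrix with entries $p(\E^{-2\pi\I k(y)/\abs{\latt}})$. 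The numbers $\E^{-2\pi\I k/\abs{\latt}}$, $k = 0,\dots,\abs{\latt}-1$, are pairwise distinct (powers of a primitive root of unity), so the associated Vandermonde system is invertible and Lagrange interpolation yields a polynomial $p$ of degree $<\abs{\latt}$ with the prescribed values $p(\E^{-2\pi\I k(y)/\abs{\latt}}) = \fr{h}_y$. Conjugating back through $\DLT^{-1}(\,\cdot\,)\DLT$ then gives $\sum_{a\in\latt}h_aT_a = p(T_e)$, as desired.

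The only delicate point --- the ``obstacle'', such as it is --- is the step ``every diagonal matrix is reachable by a polynomial in $T_e$'', which rests entirely on the eigenvalues of $T_e$ being pairwise distinct. This is precisely why the exponents chosen in $\Lambda_e$ are $0,1,\dots,\abs{\latt}-1$ rather than an arbitrary diagonal; it is the one spot where one must not be cavalier. Everything else is routine linear algebra, and property (i) is immediate.
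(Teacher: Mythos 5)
Your proposal is correct and follows essentially the same route as the paper: part (i) is the same one-line unitarity computation with $\Lambda_e$, and part (ii) is the same reduction to finding a polynomial $p$ with $p(\Lambda_e)$ equal to the (diagonal) frequency-domain representation of the filter, solved by Lagrange interpolation thanks to the pairwise distinct eigenvalues $\E^{-2\pi\I k/\abs{\latt}}$. The extra detail you give on why every filter is diagonalized by the $\DLT$ (via the bijection with frequency responses in \eqref{freqresp}--\eqref{invfreqresp}) is a welcome elaboration of a step the paper leaves implicit, but it is not a different argument.
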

\begin{proof}
    Property~\ref{prop:EnergyPreserving} follows from the simple
    derivation
    $$
    \begin{aligned}
        \norm{\DLT(T_e \coord{s})}_2 &= \norm{\DLT \DLT^{-1} \Lambda_e
          \DLT \coord{s}}_2 \\ 
        &= \norm{\Lambda_e \DLT \coord{s}}_2 = \norm{\DLT \coord{s}}_2, 
    \end{aligned}
    $$
    in which the last equality holds because $\Lambda_e$ is unitary.
    
    For~\ref{prop:GeneratesAllFilters} we need to find a polynomial
    $p_H$ to a given filter $H$ such that $p_H(T_e) = H$. Let $D_H = \DLT H \DLT^{-1}$ be the diagonal filter in the frequency domain. Thus, equivalently, we need to find $p_H$ with  $p_H(\Lambda_e) = D_H$. This is possible since all diagonal elements of $\Lambda_e$ are distinct and can be done, e.g., using Lagrange interpolation.
\end{proof}
Note that unlike the generating shifts, the energy-preserving shift
does not act sparsely, but is, in general, a full triangular matrix with
complex entries that acts globally (akin to the energy-preserving graph shift in \cite{Gavili.Zahng:2017a}). In other words $T_e$ is a non-sparse linear combination of all shifts in \eqref{shiftdefmat}:
$$
T_e = \sum_{x\in\latt}h_xT_x,\quad h_x\in\C.
$$

\mypar{Lattice Wiener filters} A Wiener filter is an optimal denoising
filter designed from a noisy version of a known reference signal. Such
a filter can handle noisy signals, which are similar to the reference
signal and where the noise is from the same noise model.

Consider a lattice signal $\coord{s}$ and a noisy measurement of the
signal $\coord{y} = \coord{s} + \coord{n}$. The lattice Wiener filter
of order $\ell$ based on the energy-preserving shift has then the form
\begin{equation}
    \label{eq:LatticeWienerFilter}
    H = \sum_{k=0}^{\ell} h_k T_e^k,
\end{equation}
where the filter coefficients $\coord{h}$ can be found by minimizing the
Euclidean error
\begin{equation}
    \label{eq:LatticeWienerFilterCoeffEquation}
    \min_{\coord{h}} \norm{H \coord{y} - \coord{s}}_2^2.
\end{equation}
We can rewrite~\eqref{eq:LatticeWienerFilterCoeffEquation} as
\begin{equation}
    \label{eq:LatticeWienerFilterCoefficients}
    \min_{\coord{h}} \norm{B \coord{h} - \coord{s}}_2^2,\quad\text{with }B =
    [\coord{y} \; T_e \coord{y} \ldots  \; T_e^{\ell} \coord{y}].
\end{equation}
The solution of this minimization problem is given by the solution of
the linear system
\begin{equation}
    \label{eq:WienerFilterCoefficientsEquation}
    B^H B \coord{h} = B^H \coord{s}
\end{equation}
for the coefficients $\coord{h}$ of the Wiener filter. Note that the
powers of $T_e$ can be computed efficiently in the frequency domain
using $T_e^k = \DLT^{-1} \Lambda_e^k \DLT$.

\mypar{Lattice white noise} White noise has equal intensity across
frequencies. In settings where the Fourier transform is orthogonal
(e.g., discrete time SP), white noise can be simulated by adding a
Gaussian noise vector with independent components to the signal. The
same procedure does not lead to white noise for a lattice signal, as
the $\DLT$ is not orthogonal. Thus we always add white noise directly
in the frequency domain. In Fig.~\ref{fig:ColoredNoise} we show an
example of white noise in lattice and frequency domain.
\begin{figure}
    \centering
    \hfill
    \subcaptionbox{\label{subfig:WhiteNoiseFrequency}}[0.49\linewidth]{
      \includegraphics[width=\linewidth]{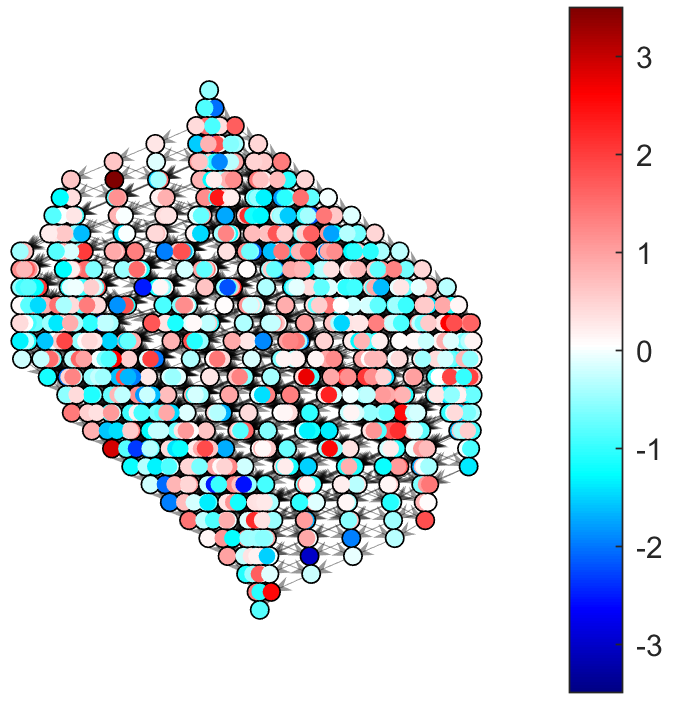} 
    }
    \hfill
    \subcaptionbox{\label{subfig:WhiteNoiseLattice}}[0.49\linewidth]{
      \includegraphics[width=\linewidth]{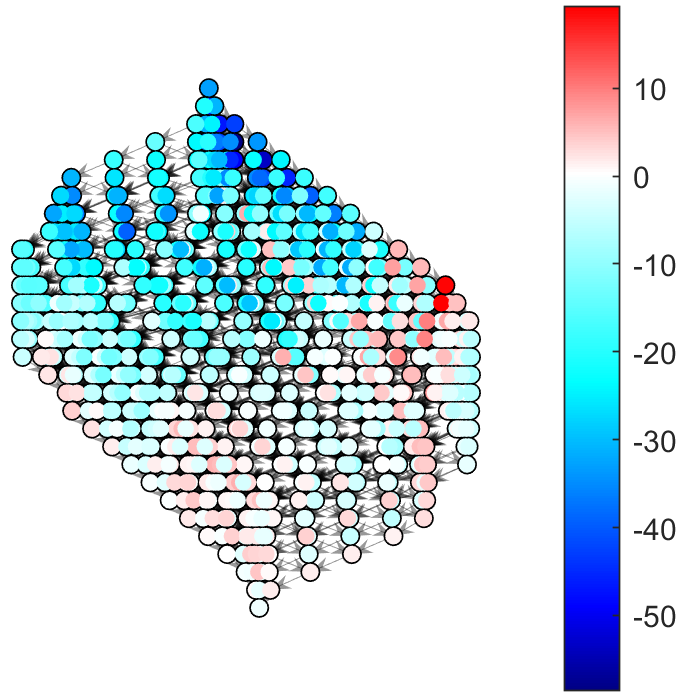}      
    }
    \hfill    
    \caption{White lattice noise in (a) frequency and (b) lattice
      domain.}
    \label{fig:ColoredNoise}
\end{figure}

\mypar{Experimental setup} For the experiment we used one secondary
bidder as reference signal $\coord{s}^{\text{ref}}$ and one primary
bidder as test signal $\coord{s}^{\text{test}}$. We added (different)
white lattice noise to the reference and test signal leading to a
signal-to-noise ratio of $12.5 \pm 2.5$ dB. We repeated the experiment
with 100 noise samples to obtain the standard deviation.

\mypar{Benchmark} As benchmark we compare the lattice Wiener filter to
a graph Wiener filter based on the cover graph as constructed in \cite{Gavili.Zahng:2017a}. Since the cover graph is directed and acyclic it is
not diagonalizable as required by~\cite{Gavili.Zahng:2017a}. Thus, we use the undirected cover graph instead. Note that the graph Wiener filter
has the disadvantage that lattice white noise cannot be modulated with
graph filters.

\mypar{Results} Fig.~\ref{fig:ResultsJoinAuctionsWienerFilter}
shows the relative reconstruction error as function of the filter
order, for the lattice and graph Wiener filtered signal. The
qualitative behavior on the reference signal for the graph Wiener
filter is as expected from~\cite{Gavili.Zahng:2017a}, but is
outperformed by the lattice Wiener filter. On the test signal the graph Wiener filter
raises the relative error, while the lattice Wiener filter reduces it.

\begin{figure}
    \centering
    \includegraphics[width=\linewidth]{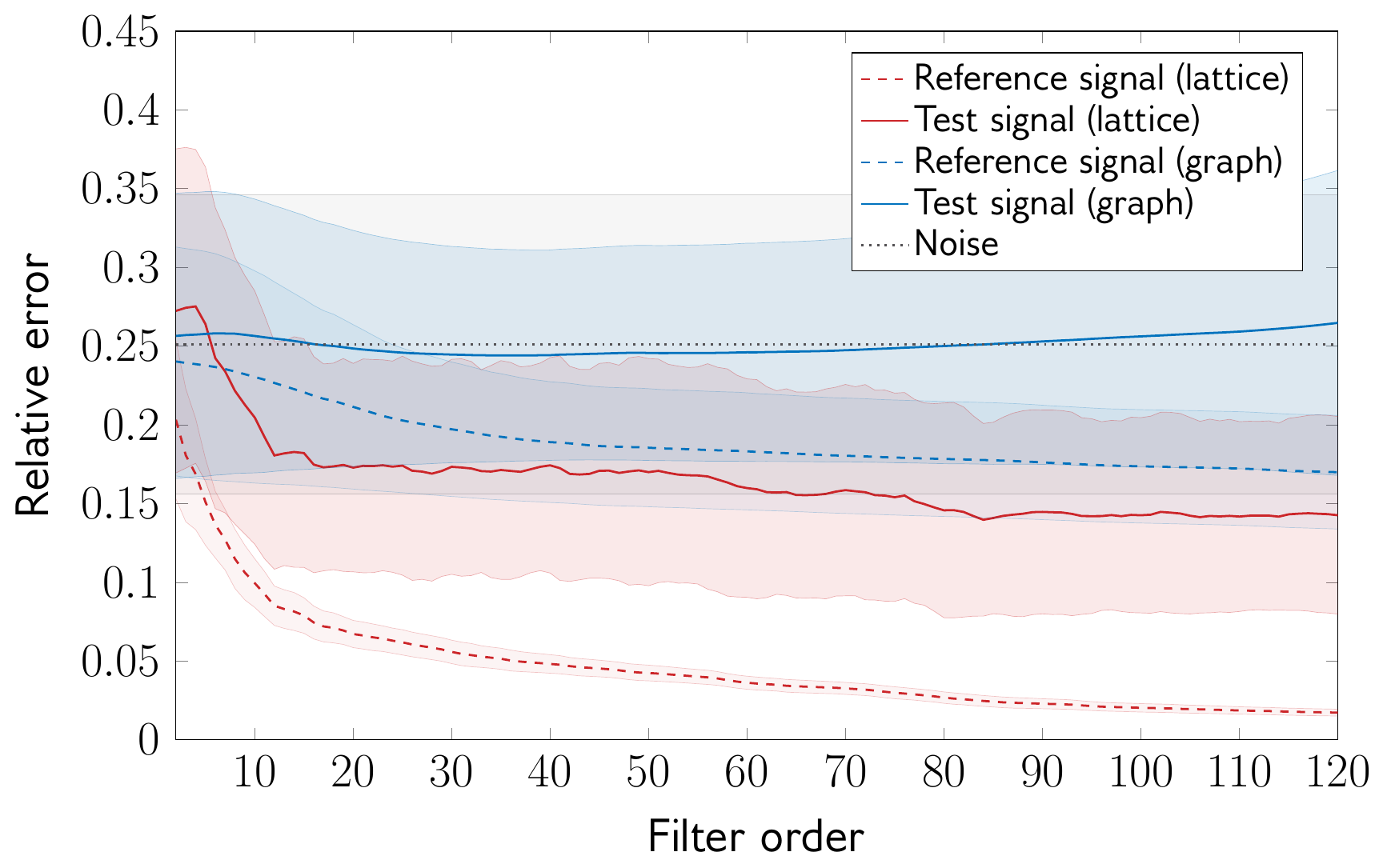}
    \caption{Denoising a bidder signal with lattice (red) and graph
      Wiener filter (blue) of different order. The results on the
      reference signals in both cases are shown dashed. The shaded
      areas are the standard deviations over 100 simulations.}
    \label{fig:ResultsJoinAuctionsWienerFilter}
\end{figure}

\section{Conclusion}

We presented discrete-lattice SP, a meaningful reinterpretation of fundamental SP concepts for signals indexed by partially ordered sets that support a meet (or join) operation. Lattices can be represented as cover graphs, but the lattice shifts operate very differently from graph shifts, capturing the partial order structure rather than proximity. Further, cover graphs are directed and acyclic and thus only have the eigenvalue zero, a problem for defining and computing a Fourier basis in graph SP.

Lattices are a fundamental structure in many domains as indicated in the introduction. In this paper we considered two examples: multiset lattices and formal concept lattices. We see particular potential in the latter, since with these lattices DLSP provides a general and novel form of Fourier analysis for data on relations, or, equivalently, bipartite graphs or hypergraphs.

Together with other novel SP frameworks that were recently introduced, DLSP shows how the power of traditional SP tools can be brought to new domains and to new kinds of data.

\ifCLASSOPTIONcaptionsoff
  \newpage
\fi



\bibliographystyle{IEEEtran}
\bibliography{paper}

\begin{thebibliography}{10}
\providecommand{\url}[1]{#1}
\csname url@samestyle\endcsname
\providecommand{\newblock}{\relax}
\providecommand{\bibinfo}[2]{#2}
\providecommand{\BIBentrySTDinterwordspacing}{\spaceskip=0pt\relax}
\providecommand{\BIBentryALTinterwordstretchfactor}{4}
\providecommand{\BIBentryALTinterwordspacing}{\spaceskip=\fontdimen2\font plus
\BIBentryALTinterwordstretchfactor\fontdimen3\font minus
  \fontdimen4\font\relax}
\providecommand{\BIBforeignlanguage}[2]{{%
\expandafter\ifx\csname l@#1\endcsname\relax
\typeout{** WARNING: IEEEtran.bst: No hyphenation pattern has been}%
\typeout{** loaded for the language `#1'. Using the pattern for}%
\typeout{** the default language instead.}%
\else
\language=\csname l@#1\endcsname
\fi
#2}}
\providecommand{\BIBdecl}{\relax}
\BIBdecl

\bibitem{Sandryhaila:13}
A.~Sandryhaila and J.~M.~F. Moura, ``Discrete signal processing on graphs,''
  \emph{IEEE Trans.~on Signal Process.}, vol.~61, no.~7, pp. 1644--1656, 2013.

\bibitem{Shuman:13}
D.~Shuman, S.~K. Narang, P.~Frossard, A.~Ortega, and P.~Vandergheynst, ``The
  emerging field of signal processing on graphs: Extending high-dimensional
  data analysis to networks and other irregular domains,'' \emph{IEEE Signal
  Process.~Mag.}, vol.~30, no.~3, pp. 83--98, 2013.

\bibitem{Ortega.Frossard.Kovacevic.Moura.Vandergheynst:2018a}
A.~Ortega, P.~Frossard, J.~Kova{\v{c}}evi{\'c}, J.~Moura, and P.~Vandergheynst,
  ``{Graph Signal Processing: Overview, Challenges, and Applications},''
  \emph{{Proc. IEEE}}, vol. 106, no.~5, pp. 808--828, 2018.

\bibitem{Bronstein:17}
M.~M. Bronstein, J.~Bruna, Y.~LeCun, S.~A., and P.~Vandergheynst, ``Geometric
  deep learning: Going beyond euclidean data,'' \emph{IEEE Signal
  Process.~Mag.}, vol.~34, pp. 18--42, 2017.

\bibitem{Zhang:20}
S.~{Zhang}, Z.~{Ding}, and S.~{Cui}, ``Introducing hypergraph signal
  processing: Theoretical foundation and practical applications,'' \emph{IEEE
  Internet Things J.}, vol.~7, no.~1, pp. 639--660, 2020.

\bibitem{Ruiz:20}
L.~Ruiz, L.~F.~O. Chamon, and A.~Ribeiro, ``Graphon signal processing,'' 2020,
  arXiv preprint arXiv:2003.05030.

\bibitem{Barbarossa:20}
S.~Barbarossa and S.~Sardellitti, ``Topological signal processing over
  simplicial complexes,'' \emph{IEEE Trans.~on Signal Process.}, vol.~68, pp.
  2992--3007, 2020.

\bibitem{Paradamayorga:20}
A.~Parada-Mayorga, H.~Riess, A.~Ribeiro, and R.~Ghrist, ``{Quiver Signal
  Processing (QSP)},'' 2020, arXiv preprint arXiv:2010.11525.

\bibitem{Pueschel:20}
M.~P{\"u}schel and C.~Wendler, ``Discrete signal processing with set
  functions,'' \emph{IEEE Trans. Signal Proc.}, vol.~69, pp. 1039--1053, 2021.

\bibitem{Pueschel:08a}
M.~P{\"u}schel and J.~M.~F. Moura, ``Algebraic signal processing theory:
  Foundation and {1-D} time,'' \emph{IEEE Trans. on Signal Processing},
  vol.~56, no.~8, pp. 3572--3585, 2008.

\bibitem{Pueschel:06c}
\BIBentryALTinterwordspacing
------, ``Algebraic signal processing theory,'' \emph{CoRR}, vol.
  abs/cs/0612077, 2006. [Online]. Available:
  \url{http://arxiv.org/abs/cs/0612077}
\BIBentrySTDinterwordspacing

\bibitem{Pueschel:08b}
------, ``Algebraic signal processing theory: {1-D} space,'' \emph{IEEE Trans.
  on Signal Processing}, vol.~56, no.~8, pp. 3586--3599, 2008.

\bibitem{Sandryhaila:12}
A.~Sandryhaila, J.~Kova{\v{c}}evi{\'c}, and M.~P{\"u}schel, ``Algebraic signal
  processing theory: {1-D} nearest-neighbor models,'' \emph{IEEE Trans.~on
  Signal Process.}, vol.~60, no.~5, pp. 2247--2259, 2012.

\bibitem{Pueschel:07}
M.~P{\"u}schel and M.~R{\"o}tteler, ``Algebraic signal processing theory: {2-D}
  hexagonal spatial lattice,'' \emph{IEEE Trans. on Image Process.}, vol.~16,
  no.~6, pp. 1506--1521, 2007.

\bibitem{Ganter.Wille:2012a}
B.~Ganter and R.~Wille, \emph{{Formal Concept Analysis: Mathematical
  Foundations}}.\hskip 1em plus 0.5em minus 0.4em\relax Springer Science \&
  Business Media, 2012.

\bibitem{Wei:16}
L.~{Wei} and Q.~{Wan}, ``Granular transformation and irreducible element
  judgment theory based on pictorial diagrams,'' \emph{IEEE Trans. on
  Cybernetics}, vol.~46, no.~2, pp. 380--387, 2016.

\bibitem{Monjardet.Raderanirina:2004a}
B.~Monjardet and V.~Raderanirina, ``Lattices of choice functions and consensus
  problems,'' \emph{Social Choice and Welfare}, no.~3, pp. 349--382, 23.

\bibitem{Lee:83}
T.~T. Lee, ``An algebraic theory of relational databases,'' \emph{The Bell
  Systems Technical Journal}, vol.~62, no.~10, pp. 3159--3204, 1983.

\bibitem{Klein.Ivanciuc.Ryzhov.Ivanciuc:2008a}
D.~Klein, T.~Ivanciuc, A.~Ryzhov, and O.~Ivanciuc, ``{Combinatorics of
  Reaction-Network Posets},'' \emph{Comb.~Chem.~High Throughput Screen.},
  vol.~11, pp. 723--733, 2008.

\bibitem{Alberch:91}
P.~Alberch, ``From genes to phenotype: dynamical systems and evolvability,''
  \emph{Genetica}, vol.~84, no.~1, pp. 5--11, 1991.

\bibitem{Beerenwinkel:11}
N.~Beerenwinkel, P.~Knupfer, and A.~Tresch, ``Learning monotonic
  genotype-phenotype maps,'' \emph{Stat.~Appl.~Genet.~Mol.~Biol.}, vol.~10,
  no.~1, 2011.

\bibitem{Bennett:09}
D.~E. Bennett, R.~J. Camacho, D.~Otelea, D.~R. Kuritzkes, H.~Fleury, M.~Kiuchi,
  W.~Heneine, R.~Kantor, M.~R. Jordan, J.~M. Schapiro \emph{et~al.}, ``Drug
  resistance mutations for surveillance of transmitted {HIV-1} drug-resistance:
  2009 update,'' \emph{PloS ONE}, vol.~4, no.~3, p. e4724, 2009.

\bibitem{Birget:93}
J.-C. Birget, ``Partial orders on words, minimal elements of regular languages,
  and state complexity,'' \emph{Theor.~Comput.~Sci.}, vol. 119, no.~2, pp.
  267--291, 1993.

\bibitem{Pueschel:19}
M.~P{\"u}schel, ``{A Discrete Signal Processing Framework for Meet/Join
  Lattices with Applications to Hypergraphs and Trees},'' in
  \emph{Proc.~Int.~Conf.~Acoust.,~Speech, and Signal Process.~(ICASSP)}, 2019,
  pp. 5371--5375.

\bibitem{Wendler:19}
C.~{Wendler} and M.~{Püschel}, ``Sampling signals on meet/join lattices,'' in
  \emph{Proc.~Global Conf. Signal and Inf.~Process.~(GlobalSIP)}, 2019, pp.
  1--5.

\bibitem{Seifert.Wendler.Pueschel:2021a}
B.~Seifert, C.~Wendler, and M.~Püschel, ``{Wiener Filter on Meet/Join
  Lattices},'' 2021, to appear in Proc.~Int.~Conf.~Acoust.,~Speech, and Signal
  Process.~(ICASSP).

\bibitem{Stanley:11}
R.~P. Stanley, \emph{Enumerative Combinatorics: Volume 1}, 2nd~ed.\hskip 1em
  plus 0.5em minus 0.4em\relax Cambridge University Press, 2011.

\bibitem{Graetzer:11}
G.~Gr{\"a}tzer, \emph{Lattice Theory: Foundation}.\hskip 1em plus 0.5em minus
  0.4em\relax Birkh{\"a}user, 2011.

\bibitem{Pueschel:18}
M.~P{\"u}schel, ``A discrete signal processing framework for set functions,''
  in \emph{Proc.~Int.~Conf.~Acoust.,~Speech, and Signal Process.~(ICASSP)},
  2018.

\bibitem{Rota:64}
G.-C. Rota, ``On the foundations of combinatorial theory. {I}. theory of
  {M}{\"o}bius functions,'' \emph{Z. Wahrscheinlichkeitstheorie und Verwandte
  Gebiete}, vol.~2, no.~4, pp. 340--368, 1964.

\bibitem{Bjorklund:15}
A.~Bj\"{o}rklund, T.~Husfeldt, P.~Kaski, M.~Koivisto, J.~Nederlof, and
  P.~Parviainen, ``Fast zeta transforms for lattices with few irreducibles,''
  \emph{ACM Trans. on Algorithms}, vol.~12, no.~1, pp. 4:1--4:19, 2015.

\bibitem{Kaski.Kohonen.Westerbaeck:2016a}
P.~Kaski, J.~Kohonen, and T.~Westerbäck, ``{Fast Möbius Inversion in
  Semimodular Lattices and ER-labelable Posets},'' \emph{Electron.~J.~Comb.},
  vol.~23, no.~3, p. P3.26, 2016.

\bibitem{Doob:94}
J.~L. Doob, \emph{Measure Theory}, ser. Graduate Texts in Mathematics.\hskip
  1em plus 0.5em minus 0.4em\relax Springer, New York, 1994, vol. 143.

\bibitem{Anis:16}
A.~Anis, A.~Gadde, and A.~Ortega, ``Efficient sampling set selection for
  bandlimited graph signals using graph spectral proxies,'' \emph{IEEE
  Trans.~on Signal Process.}, vol.~64, no.~14, pp. 3775--3789, 2016.

\bibitem{Krause:14}
A.~Krause and D.~Golovin, \emph{Tractability: Practical Approaches to Hard
  Problems}.\hskip 1em plus 0.5em minus 0.4em\relax Cambridge University Press,
  2014, ch. Submodular function maximization, pp. 71--104.

\bibitem{Riess.Hansen:2020}
H.~Riess and J.~Hansen, ``{Multidimensional Persistence Module Classification
  via Lattice-Theoretic Convolutions},'' 2020, arXiv preprint arXiv:2011.14057.

\bibitem{Vetterli:14}
M.~Vetterli, J.~Kova{\v{c}}evi{\'c}, and V.~K. Goyal, \emph{Foundations of
  Signal Processing}.\hskip 1em plus 0.5em minus 0.4em\relax Cambridge
  University Press, 2014.

\bibitem{Gantner.Wille:1997a}
B.~Ganter and R.~Wille, ``Applied lattice theory: Formal concept analysis,'' in
  \emph{General Lattice Theory}, G.~Grätzer, Ed.\hskip 1em plus 0.5em minus
  0.4em\relax Birkhäuser, 1997.

\bibitem{Cramton:13}
P.~Cramton, ``Spectrum auction design,'' \emph{Rev.~Ind.~Organ.}, vol.~42,
  no.~2, pp. 161--190, 2013.

\bibitem{Ausubel:06}
L.~M. Ausubel, P.~Cramton, and P.~Milgrom, ``The clock-proxy auction: A
  practical combinatorial auction design,'' in \emph{Combinatorial Auctions},
  P.~Cramton, Y.~Shoham, and R.~Steinberg, Eds.\hskip 1em plus 0.5em minus
  0.4em\relax MIT Press, 2006, pp. 115--138.

\bibitem{Blum:04}
A.~Blum, J.~Jackson, T.~Sandholm, and M.~Zinkevich, ``Preference elicitation
  and query learning,'' \emph{J.~Mach.~Learn.~Res.}, vol.~5, pp. 649--667,
  2004.

\bibitem{Weiss:17}
M.~Weiss, B.~Lubin, and S.~Seuken, ``{SATS}: A universal spectrum auction test
  suite,'' in \emph{Proc. Conf. Autonomous Agents and MultiAgent Systems
  (AAMAS)}, 2017, pp. 51--59.

\bibitem{Weissteiner:20}
J.~Weissteiner and S.~Seuken, ``Deep learning-powered iterative combinatorial
  auctions,'' in \emph{Proc.~AAAI Conf.~of Artificial Intelligence}, 2020, pp.
  2284--2293.

\bibitem{Weissteiner:20a}
J.~Weissteiner, C.~Wendler, S.~Seuken, B.~Lubin, and M.~Püschel, ``{Fourier
  Analysis-based Iterative Combinatorial Auctions},'' 2020, arXiv preprint
  arXiv:2009.10749.

\bibitem{Gavili.Zahng:2017a}
A.~Gavili and X.-P. Zhang, ``On the shift operator, graph frequency, and
  optimal filtering in graph signal processing,'' \emph{{IEEE} Trans. Sign.
  Proc.}, vol.~65, no.~23, pp. 6303 -- 6318, 2017.

\end{thebibliography}

\begin{IEEEbiography}[{\includegraphics[width=1in,height=1.25in,clip,keepaspectratio]{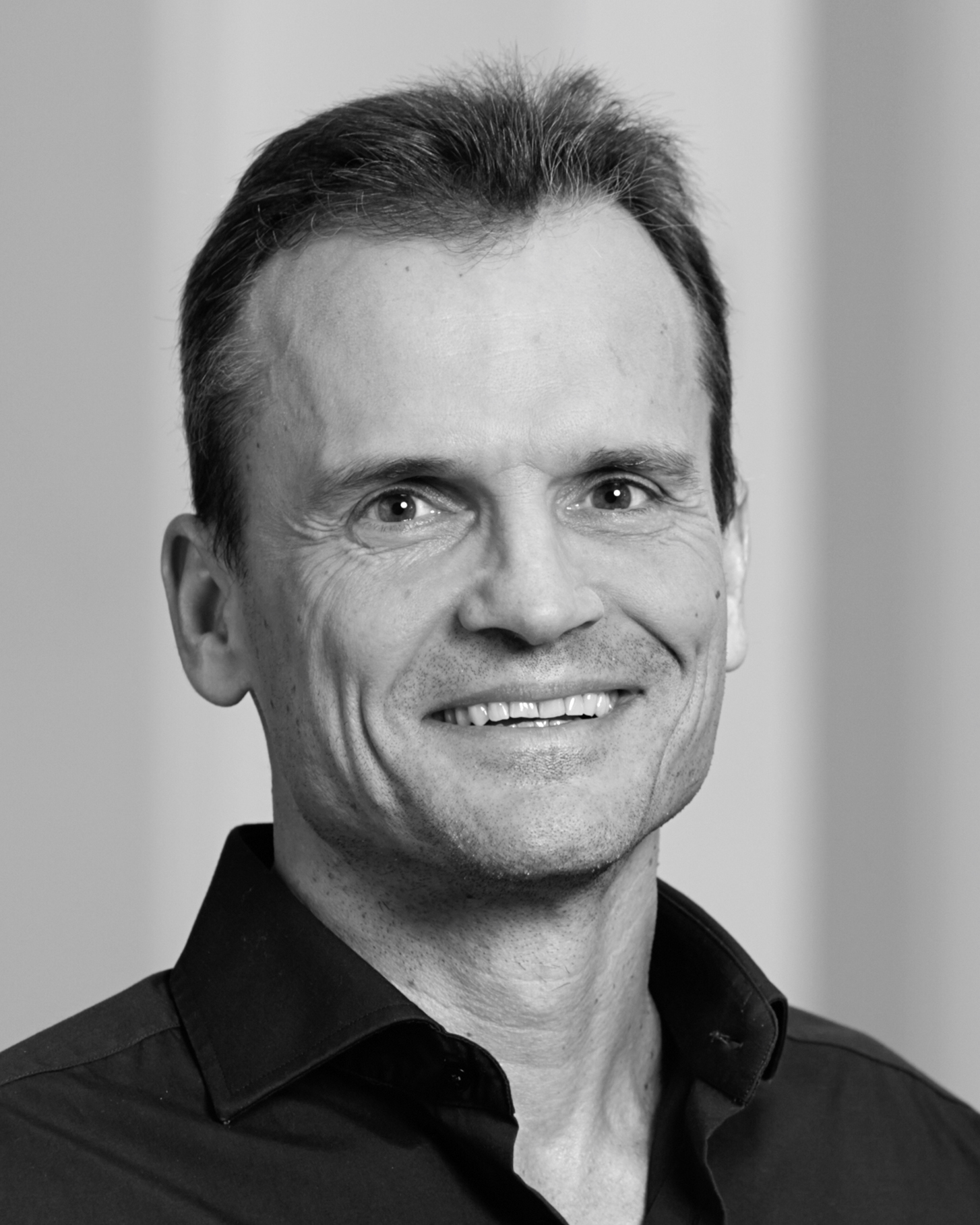}}]{Markus P\"uschel}
	(Fellow, IEEE) received the Diploma (M.Sc.) in mathematics and Doctorate
	(Ph.D.) in computer science, in 1995 and 1998, respectively, both from the University of Karlsruhe, Germany. He is a Professor of Computer Science with ETH Zurich, Switzerland, where he was the Head of the Department from 2013 to 2016. Before joining ETH in 2010, he was a Professor with Electrical and Computer Engineering, Carnegie Mellon University (CMU), where he still has an Adjunct status. He was an Associate Editor for the IEEE Transactions on Signal Processing, the IEEE Signal Processing Letters, and was a Guest Editor of the Proceedings of the IEEE and the Journal of Symbolic Computation, and served on numerous program committees of conferences in computing, compilers, and programming languages. He received the main teaching awards from student organizations of both institutions CMU and ETH and a number of awards for his research. His current research interests include algebraic signal processing, program generation, program analysis, fast computing, and machine learning.
\end{IEEEbiography}

\begin{IEEEbiography}[{\includegraphics[width=1in,height=1.25in,clip,keepaspectratio]{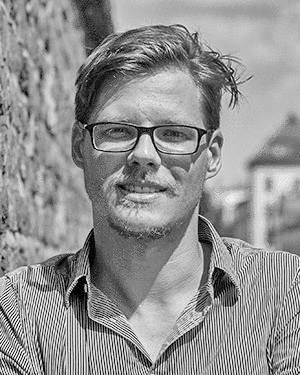}}]{Bastian  Seifert}
    (Member, IEEE) received the B.Sc. degree in mathematics from Friedrich-Alexander-Universität Erlangen-Nürnberg, Germany, in 2013, and the M.Sc. and Ph.D. degrees in mathematics from the Julius-Maximilians-Universität Würzburg, Germany, in 2015 and 2020, respectively. He was a Research Associate with the Center for Signal Analysis of Complex Systems (CCS), the University of Applied Sciences, Ansbach, Germany. Currently, he is a Postdoc at ETH Zurich, Switzerland. His research interests include algebraic signal processing, dimensionality reduction, and applied mathematics.
\end{IEEEbiography}

\begin{IEEEbiography}[{\includegraphics[width=1in,height=1.25in,clip,keepaspectratio]{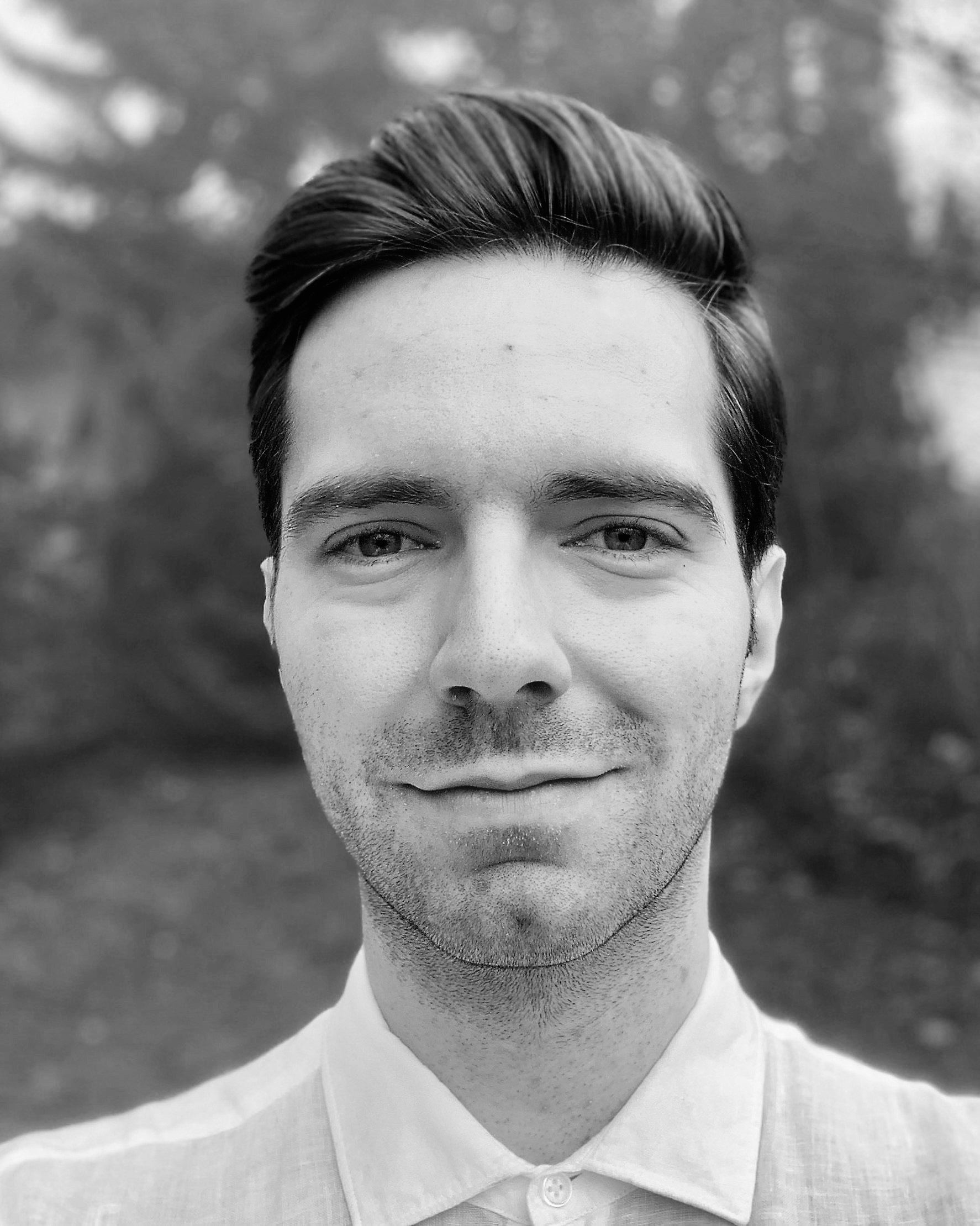}}]{Chris Wendler}
	received the B.Sc. and M.Sc. in computer science and the B.Sc. in mathematics, in 2013, 2016 and 2017, respectively, all from the Leopold-Franzens-University of Innsbruck, Austria. Currently, he is a Ph.D. candidate in the advanced computing laboratory at the department of computer science at ETH Zurich, Switzerland. His research focuses on signal processing and machine learning in non-Euclidean domains.
\end{IEEEbiography}
%
%
%

%



\vfill



%

\end{document}